\newif\ifdraft
\newif\iftechreport
\crefname{algocf}{prot.}{prots.}
\Crefname{algocf}{Protocol}{Protocols}
\crefname{lemma}{lemma}{lemmas}
\newcommand{\sysname}{Fulgor\xspace}
\newcommand{\sysnamenonblocking}{Rayo\xspace}
\newcommand{\ttp}{\ensuremath{\mathcal{F}}\xspace}
\newcommand{\blockchain}{\ensuremath{\mathsf{B}}\xspace}
\newcommand{\readblock}{\ensuremath{\mathsf{read}}\xspace}
\newcommand{\openchannel}{\ensuremath{\mathsf{open}}\xspace}
\newcommand{\closechannel}{\ensuremath{\mathsf{close}}\xspace}
\newcommand{\success}{\ensuremath{\mathsf{success}}\xspace}
\newcommand{\VV}{\ensuremath{\mathbb{V}}\xspace}
\newcommand{\EE}{\ensuremath{\mathbb{E}}\xspace}
\newcommand{\capacity}{capacity\xspace}
\newcommand{\capa}{\ensuremath{\mathsf{cap}}\xspace}
\newcommand{\acc}{address\xspace}
\newcommand{\accs}{addresses\xspace}
\newcommand{\pcn}{PCN\xspace}
\newcommand{\pcns}{PCNs\xspace}
\newcommand{\pc}{payment channel\xspace}
\newcommand{\pcs}{payment channels\xspace}
\newcommand{\pcstate}{\ensuremath{\textit{channel-state}}\xspace}
\newcommand{\ops}{\ensuremath{\textit{decision}}\xspace}
\newcommand{\cur}{\ensuremath{\textit{cur}}\xspace}
\newcommand{\q}{\ensuremath{\textit{Q}}\xspace}
\newcommand{\ggroup}{\ensuremath{\mathcal{G}}\xspace}
\newcommand{\df}{\ensuremath{:=}\xspace}
\newcommand{\add}{\ensuremath{\textit{append}}\xspace}
\newcommand{\delete}{\ensuremath{\textit{delete}}\xspace}
\newcommand{\agreef}{\ensuremath{f}\xspace}
\newcommand{\send}{\ensuremath{\mathsf{Send}}\xspace}
\newcommand{\hold}{\ensuremath{\mathsf{forward}}\xspace}
\newcommand{\release}{\ensuremath{\mathsf{abort}}\xspace}
\newcommand{\confirm}{\ensuremath{\mathsf{accept}}\xspace}
\newcommand{\valtext}{\capa}
\newcommand{\offchain}{off-chain\xspace}
\newcommand{\onchain}{on-chain\xspace}
\newcommand{\uid}{\ensuremath{u}\xspace}
\newcommand{\user}{user\xspace}
\newcommand{\users}{users\xspace}
\newcommand{\Users}{Users\xspace}
\newcommand{\nw}{\textit{nw}\xspace}
\newcommand{\payment}{payment\xspace}
\newcommand{\payments}{payments\xspace}
\newcommand{\Payment}{Payment\xspace}
\newcommand{\Payments}{Payments\xspace}
\newcommand{\payer}{sender\xspace}
\newcommand{\payee}{receiver\xspace}
\newcommand{\pparagraph}[1]{ \smallskip \noindent \textbf{#1.\ }}
\newcommand{\sender}{\ensuremath{\mathsf{Sdr}}\xspace}
\newcommand{\receiver}{\ensuremath{\mathsf{Rvr}}\xspace}
\newcommand{\Txid}{\ensuremath{\mathsf{Txid}}\xspace}
\newcommand{\bit}[1]{\{0,1\}^{#1}}
\newcommand{\Setup}{\ensuremath{\mathsf{Setup}}}
\newcommand{\prob}[1]{\ensuremath{\mathsf{Pr}\left[#1\right]}\xspace}
\newcommand{\negl}{\ensuremath{\mathsf{negl}}}
\newcommand{\ppt}{\ensuremath{\mathsf{PPT}}\xspace}
\newcommand{\pay}{\ensuremath{\mathsf{pay}}\xspace}
\newcommand{\test}{\ensuremath{\mathsf{test}}\xspace}
\newcommand{\consensustwo}{\ensuremath{\mathsf{2ProcCons}}\xspace}
\newcommand{\openc}{\ensuremath{\mathsf{openChannel}}\xspace}
\newcommand{\closec}{\ensuremath{\mathsf{closeChannel}}\xspace}
\newcommand{\val}{\ensuremath{\textit{v}}\xspace}
\newcommand{\simulate}{\ensuremath{\mathcal{S}}\xspace}
\newcommand{\environment}{\ensuremath{\mathcal{E}}\xspace}
\newcommand{\relation}{\ensuremath{\mathcal{R}}\xspace}
\newcommand{\lang}{\ensuremath{\mathcal{L}}\xspace}
\newcommand{\newhtlc}{Multi-Hop HTLC\xspace}
\newcommand{\newhtlclong}{Multi-Hop Hash Time-Lock Contract\xspace}
\newcommand{\money}{coins\xspace}
\newcommand{\hl}[1]{\color{cyan}#1\color{black}}
\newcommand{\timeout}{\ensuremath{t}\xspace}
\newcommand{\fee}{\ensuremath{f}\xspace}
\newcommand{\cid}{\ensuremath{c}}
\newcommand{\cstate}{\ensuremath{h}\xspace}
\newcommand{\pair}[2]{\ensuremath{\cid_{\langle{\ensuremath{#1}},{\ensuremath{#2}}\rangle}}\xspace}
\newcommand{\colorchange}{light blue\xspace}
\newcommand{\queue}{\ensuremath{\mathcal{W}\xspace}}
\newcommand{\queued}{\ensuremath{\textsf{queued}}\xspace}
\newcommand{\correctbal}{balance security\xspace}
\newcommand{\Correctbal}{Balance security\xspace}
\newtheorem{definition}{Definition}
\newtheorem{lemma}{Lemma}
\newtheorem{theorem}{Theorem}
\newcommand{\mytitle}{Concurrency and Privacy with Payment-Channel Networks\thanks{This is the 
revision 6 September 2017. The most recent version is available at \url{https://eprint.iacr.org/2017/820} }} 
\date{}
\begin{document}

\title{\mytitle} 

\author{Giulio Malavolta\thanks{Both authors contributed equally and are considered to be co-first authors.}\\
Friedrich-Alexander-University Erlangen-N\"urnberg\\
malavolta@cs.fau.de\\
\and
Pedro Moreno-Sanchez$^\dagger$\\
Purdue University\\
pmorenos@purdue.edu\\
\and
Aniket Kate\\
Purdue University\\
aniket@purdue.edu\\
\and
Matteo Maffei\\
TU Wien\\
matteo.maffei@tuwien.ac.at\\
\and
Srivatsan Ravi\\
University of Southern California\\
srivatsr@usc.edu\\
}
%

\twocolumn[
  \begin{@twocolumnfalse}
	\maketitle


\begin{abstract}
\REPLACEAfor{SGM}{170613}{Blockchain payment protocols based on global consensus}{Permissionless blockchains protocols} such as Bitcoin are inherently limited in transaction throughput and 
latency. Current efforts to address this key issue focus on off-chain \payment channels that can be combined in a Payment-Channel Network
(\pcn) to enable an unlimited 
number of \payments without requiring to access the blockchain other than 
to register the initial and final \capacity of each channel. While this approach 
paves the way for low latency and high throughput of \payments, its deployment in 
practice raises several privacy concerns 
as well as  technical challenges related to the inherently  concurrent nature of \payments
\REMOVEA{, such as race conditions and deadlocks,} that have not been sufficiently studied so far.

 In this work, 
 we lay the foundations for privacy and concurrency in \pcns, presenting a formal definition in the Universal Composability framework 
 as well as practical and provably secure  solutions. In particular, we present \sysname and \sysnamenonblocking. 
 \sysname is the first payment protocol for \pcns that provides provable privacy guarantees for \pcns and is fully compatible with \REPLACEAfor{SGM}{170613}{current Bitcoin}{the Bitcoin scripting system}. 
 However, \sysname is a blocking protocol and therefore prone to deadlocks of concurrent \payments as in currently available \pcns. Instead, \sysnamenonblocking 
 is the first protocol for \pcns that enforces \emph{non-blocking progress} (i.e., at least one of the concurrent \payments terminates).  We show through a new impossibility result that \REPLACEPfor{SGAM}{170823}{the latter property}{non-blocking progress} necessarily comes
 at the cost of \REPLACEAfor{SGM}{170613}{breaking anonymity}{weaker privacy}. 
At the core of \sysname and \sysnamenonblocking is  \newhtlc, a new smart contract,  compatible with the Bitcoin scripting system, that provides  
 conditional payments while reducing running time and communication overhead with respect to previous approaches. 
 Our performance evaluation of \sysname and \sysnamenonblocking shows that  a \payment with $10$ intermediate \users takes  
 as few as 5 seconds\REMOVEAfor{SGM}{170613}{ and requires to communicate $17$ MB}, thereby demonstrating their feasibility to be deployed in practice.

\end{abstract}
  \end{@twocolumnfalse}
]

\maketitle


\section{Introduction}
\label{sec:intro}

Bitcoin\REMOVEPfor{SGA}{170519}{\footnote{Following the notation established in the Bitcoin community, we use  Bitcoin to refer to the Bitcoin network, 
bitcoin to denote the coins and BTC to denote the currency.}}~\cite{bitcoin} is a fully decentralized digital cryptocurrency network 
that is widely adopted today as an alternative monetary \payment system. 
Instead of accounting \payments in a ledger locally maintained by a trusted financial institute, 
these are logged in the Bitcoin blockchain, a database replicated 
among mutually distrusted \users around the world who update it by means of 
a  global consensus algorithm based on proof-of-work. Nevertheless, the permissionless nature 
of this consensus algorithm limits the transaction rate to tens of transactions per second 
whereas other \payment networks such as Visa support peaks of up to 
47,000 transactions per second~\cite{visa-peak}.

In the forethought of a growing number of Bitcoin \users and most importantly \payments about them, 
scalability is considered today an important concern among the Bitcoin 
community~\cite{scalability-article,bitcoin-scalability-faq}.
Several research and industry efforts are dedicated today to overcome this important 
burden~\cite{bitcoin-scalability-faq, flare, ln, dmc, bitcoin-scalability, bitcoin-contract}.

The use of Bitcoin \emph{payment channels}~\cite{paychannels, dmc} to realize \offchain \payments has flourished as a promising 
approach to overcome the Bitcoin scalability issue.
In a nutshell, a pair of \users open a payment channel by 
adding a single transaction to the blockchain where they lock their bitcoins in a deposit secured by a Bitcoin smart contract. 
Several  \offchain \payments can be then performed by locally agreeing on the new 
distribution of the deposit balance. 
Finally, the \users sharing the payment channel perform another Bitcoin transaction to add the final balances in the blockchain, 
effectively closing the payment channel.

In this manner,  
the blockchain is required 
to open and close a payment channel but not for any of the (possibly many) \payments between  users, thereby 
reducing the load on the blockchain and improving the transaction throughput. However, this simple approach is 
limited to direct \payments between two \users sharing an open channel. 
Interestingly,  it is in principle possible to leverage a path of opened payment channels from the sender 
to the receiver with enough \capacity to settle their \payments, effectively creating a \emph{payment-channel network (\pcn)}~\cite{ln}.

Many challenges must be overcome so that such a \pcn caters a wide deployment with a growing 
number of \users and \payments. In particular, today we know from similar \payment systems such as credit 
networks~\cite{ryan-network, trust-networks, ripple, stellar} 
that a fully-fledged \pcn must offer a solution to several issues, such as liquidity~\cite{liquidity-credit-networks, Moreno-SanchezM17}, 
network formation~\cite{formation-credit-networks}, routing scalability~\cite{bazaar, canal}, 
concurrency~\cite{silentwhispers}, and privacy~\cite{ripple-pets, privpay-ndss, silentwhispers} among others. 
\REPLACEPfor{SGA}{170216}{In this work, 
we shed light into some of these challenges in a \pcn. \TODOM{which of them have already been solved and by whom? and which ones do we address here? I would say privacy and concurrency, that's it, right? Anyway, we should be very precise about what is missing in the state-of-the-art and what we do here. }}
{

The Bitcoin community has started to identify these challenges~\cite{teechan, tumblebit, bolt, ln-scalability-challenges, Herrera-privacy, kristov-privacy-ln, bitcoin-scalability-faq, scalability-article}. Nevertheless, current \pcns are still immature\REMOVEAfor{SGM}{170613}{, only alpha implementations for testing purposes are available} and \REPLACEPfor{SGAM}{170823}{none of these challenges 
has been thoroughly studied yet}{these challenges require to be thoroughly studied}. In this work, we lay the foundations for privacy and concurrency in \pcns. Interestingly, we show that these two properties are connected to each other and that there exists an inherent trade-off between them.  
}



\pparagraph{The Privacy Challenge} 
It seems that payment channels necessarily improve the 
privacy of Bitcoin \payments as they are no longer logged in the blockchain. 
However,  such pervading 
idea has started to  be questioned by the community and it is not clear at this point 
whether a \pcn can offer sufficient privacy guarantees~\cite{ln-censor-resistance, kristov-privacy-ln, Herrera-privacy}.  
Recent research works~\cite{teechan, tumblebit, bolt} propose privacy preserving protocols for 
payment hub networks, where all \users perform \offchain \payments through a unique intermediary. Unfortunately, 
it is not clear how to extend these solutions to multi-hop \pcns. 
\REPLACEPfor{SGA}{170216}{Some recent works~\cite{flare,ln} aim at achieving privacy 
in \pcns through dedicated protocol, but the  
the lack of a rigorous  description of the privacy notions and threat models makes it infeasible to conduct a formal security analysis and to assess  the privacy guarantees actually provided by these proposals~\cite{kristov-privacy-ln}.  \MESSAGEMfor{SPGA}{170216}{weak, it would be better to point to specific limitations.}}
{

Currently, there exist some efforts in order to define a fully-fledged \pcn~\cite{npc-impl4, ln, raiden-nw, npc-impl3}. Among them, the Lightning Network~\cite{ln}
has emerged as  the most prominent \pcn among the Bitcoin community~\cite{ln-release}. 
However, its current operations do not provide all the privacy guarantees of interest in a \pcn. 
\MESSAGEPfor{SPGAM}{170519}{We might want to drop the following sentences as we no longer have routing protocol} 
For instance, the \REPLACEGfor{SAM}{170823}{calculation}{computation} of the 
maximum possible value to be routed through a \payment path requires that intermediate \users reveal the current \capacity of their \pcs 
to the sender~\cite[Section 3.6]{flare}, thereby leaking sensitive information. 
Additionally, the Bitcoin smart-contract used in the Lightning Network to enforce \REPLACEPfor{SGA}{170519}{that all \pcs in a path are correctly updated as a consequence 
of a \payment}{atomicity of updates for \pcs included in the \payment path},  requires to reveal a common 
hash value among each \user in the path   that can be used by intermediate \users to derive 
who is paying to whom~\cite{ln}.  As a matter of fact, 
while a plethora of academic papers have studied the privacy guarantees offered by current 
Bitcoin \payments on the Bitcoin blockchain~\cite{imc-meiklejohn, bitter-to-better, bitiodine, Koshy2014, 
	Reid2013, Androulaki2013, coinjoin-analysis}, there exists at present no  rigorous analysis of the privacy guarantees offered by or desirable in \pcns.  The lack of rigorous definitions for their protocols, threat model and  
 privacy notions, hinders a formal security and privacy analysis of ongoing attempts, let alone the development of provably secure and privacy-preserving solutions.}


\pparagraph{The Concurrency Challenge} The consensus algorithm, e.g., proof-of work in Bitcoin, eases the serialization of concurrent 
\onchain \payments. A miner with access to all concurrent \payments at a given time can easily serialize them following a set of predefined rules  
(e.g., sort them by \payment fee)  before they are added to the blockchain.  However, this is no longer the case in a \pcn:  
The bulk of \offchain \payments are not added to the blockchain \REMOVEAfor{SGM}{170613}{and therefore miners do not have a view of all concurrent \payments} and 
they cannot be serialized during consensus. 
Moreover, individual \users cannot avoid concurrency issues easily either as a \payment might involve several other \users apart from payer and payee. 

\NEWPfor{SGA}{170518}{In current \pcns such as the Lightning Network, a \payment is aborted as soon as a \pc in the path does not have enough capacity (possibly allocated 
for another in-flight \payment concurrently). This, however, leads to deadlock (and starvation) situations 
where none of the in-flight \payments terminates. }
\REPLACEPfor{SGAM}{170823}{A}{In summary, a}lthough concurrent \payments are likely to happen when current \pcns scale to a large number of \users and \offchain \payments, the 
inherent concurrency issues have not been thoroughly investigated yet.

\REMOVEPfor{SGA}{170216}{\MESSAGEPfor{SGAM}{170216}{Added more concrete arguments}\TODOM{what I find weak of this part is that it is not clear if existing proposals are great, just they lack a formal analysis, or they actually fall short of providing meaningful privacy guarantees.}}

\MESSAGEPfor{SGA}{170215}{We might want to discuss also here the ``concurrency challenge''. Currently is added to 
contribution and I will add it to the intro of concurrency section.}

\pparagraph{Our Contribution} This work makes the following contributions:

First, we formalize for the first time the security and privacy notions 
of interest for a \pcn, namely \emph{\correctbal},  \emph{value privacy} 
and \emph{sender/receiver anonymity}, following the universal composability (UC) framework~\cite{CanettiUC}.


Second, we study for the first time the concurrency issues in \pcns and present two protocols \sysname 
and \sysnamenonblocking that tackle this issue with different strategies. \sysname is a blocking protocol in 
line with concurrency solutions 
proposed in somewhat similar payment networks such as credit networks~\cite{silentwhispers, ripple} that 
can lead to deadlocks where none of the concurrent \payments go through. Overcoming this challenge, 
\sysnamenonblocking is the first protocol for \pcns guaranteeing \emph{non-blocking} progress~\cite{HS11-progress,AS85}. 
In doing so, \sysnamenonblocking ensures that 
at least one of the concurrent \payments terminates.

Third, we characterize an arguably surprising tradeoff between privacy and concurrency in \pcns. In particular, 
we demonstrate that any \pcn that enforces non-blocking progress inevitably reduces the 
anonymity set for sender and receiver of a \payment, thereby weakening the 
privacy guarantees.

Fourth, we formally describe the \newhtlclong (\newhtlc), a smart contract that lies at the core of \sysname and \sysnamenonblocking and which, 
in contrast to the Lightning Network, 
ensures  privacy properties even against 
\users in the payment path from payer to payee. 
\NEWPfor{SGAM}{170519}{
We formally define the \newhtlc contract and provide an 
efficient instantiation based on the recently proposed zero-knowledge proof system ZK-Boo~\cite{zkboo}, that improves on previous 
proposals~\cite{htlc-snarks} by reducing the  data required from $650$ MB to 
$17$ MB, the running time for the prover from $600$ ms to  $309$ ms and the running time for verifying from $500$ ms to $130$ ms.  Moreover, 
\newhtlc  does not require changes to the current Bitcoin \NEWPfor{SGAM}{170823}{scripting} system, \REPLACEPfor{SGAM}{170823}{and therefore 
can}{can thereby} be seamlessly deployed in current \pcns, and is thus  of independent interest.}


Finally, we have implemented a prototype of \sysname and \sysnamenonblocking in Python and evaluated the 
running time and communication cost to perform a  \payment. 
Our results show that a privacy-preserving \payment in a path with $10$ intermediate \users 
can be carried out in as few as $5$ seconds 
and incurs on $17$ MB of communication overhead.   
This shows that our protocols for \pcn are in line with with other privacy-preserving payment systems~\cite{privpay-ndss, silentwhispers}. 
Additionally, our evaluation shows that 
\sysname and \sysnamenonblocking can scale to cater a growing number of \users with a reasonably small overhead that can be 
further reduced with an optimized implementation.



\pparagraph{Organization}
\Cref{sec:background} overviews the required background. 
\Cref{sec:definition} defines the problem we tackle in this work and overviews \sysname and \sysnamenonblocking, our privacy preserving solution for \pcns. 
\Cref{sec:construction} details the \sysname protocol. 
\Cref{sec:concur-payments} describes our study of concurrency in \pcns and 
details the \sysnamenonblocking protocol. 
\Cref{sec:implementation} describes our implementation and the evaluation results. 
\Cref{sec:related-work} discusses the related work and~\cref{sec:conclusions} concludes 
this paper. 


\section{Background}
\label{sec:background}

In this section, we first overview the  notion of \pcs and we then describe payment-channel 
networks. 

\subsection{Payment Channels}
\label{sec:pay-channels}

A payment channel enables several Bitcoin \payments between two \users without committing every single 
\payment to the Bitcoin blockchain. The cornerstone of payment channels is depositing bitcoins into 
a multi-signature \acc controlled by both \users and having the guarantee that all bitcoins are refunded at a 
mutually agreed time if the channel expires. In the following, we overview the basics of payment channels 
and we refer the reader to~\cite{ln, dmc, towards-pn} for further details.

\begin{figure}[t]
\includegraphics[width=\columnwidth]{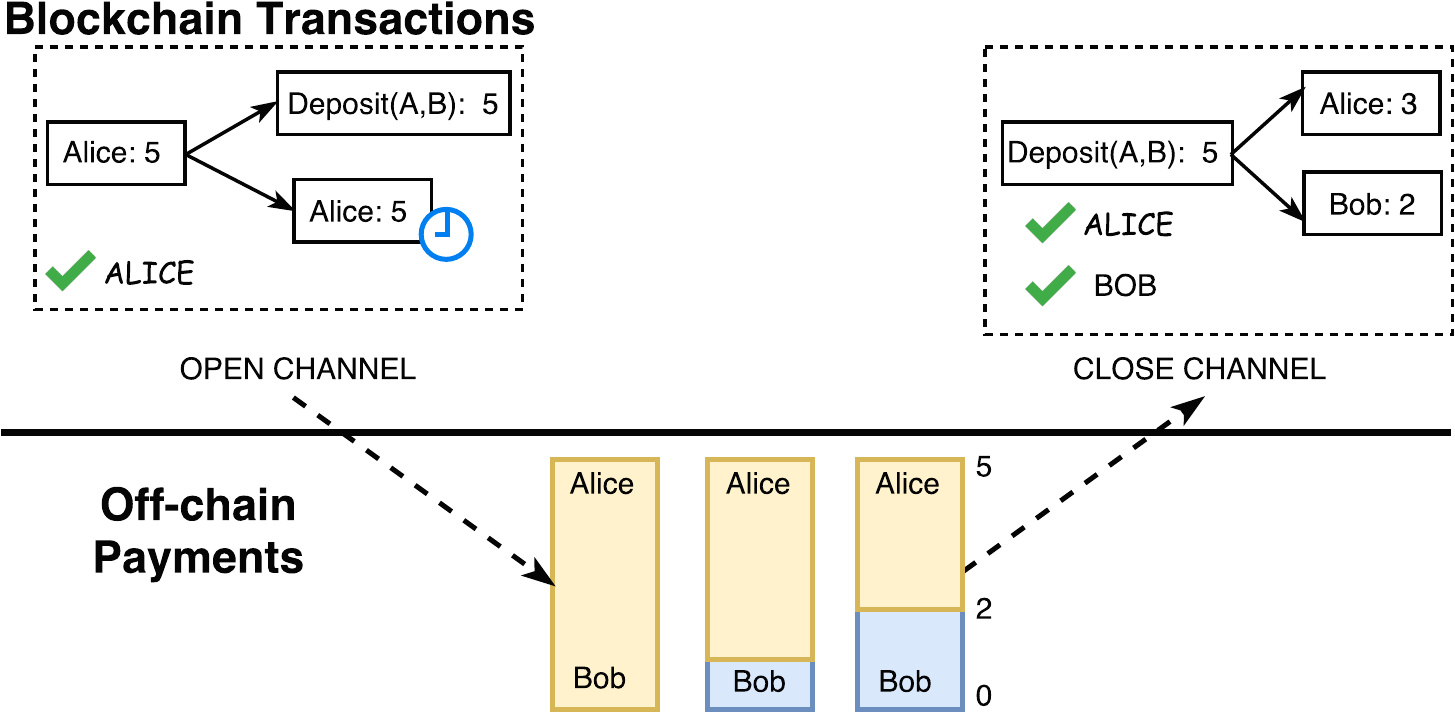}
\caption{{\bf{Illustrative example of \pc.}} White \NEWPfor{SGAM}{170823}{solid} boxes denote Bitcoin \accs and their current balance, \NEWPfor{SGAM}{170823}{dashed boxes represent Bitcoin transactions}, the clock denotes a time lock\NEWPfor{SGAM}{170823}{ contract~\cite{bitcoin-timelock}}, a \user name along a tick denotes  
her signature to validate the transaction and colored boxes denote the state of the \pc.  Dashed arrows denote temporal sequence. 
Alice first deposits $5$ bitcoins opening a \pc with Bob, then uses it to 
pay Bob \offchain. Finally, the \pc is closed with the most recent balance. \label{fig:paychannel}}
\end{figure}

In the illustrative example depicted in~\cref{fig:paychannel}, Alice opens a \pc with Bob with an initial capacity of $5$ bitcoins. 
This \emph{opening} transaction makes sure that Alice gets the money back after a certain timeout if the \pc is not used. 
Now, Alice can pay \offchain to Bob 
by adjusting the balance of the deposit in favor of Bob. Each \offchain \payment augments the balance for Bob and reduces it 
for Alice. When no more \offchain \payments are needed (or the capacity of the \pc is exhausted), the \pc is closed with a \emph{closing} 
transaction included in the blockchain. This transaction sends the deposited bitcoins to each \user according the most 
recent balance in the \pc.  

The \pc depicted in~\cref{fig:paychannel} is an example of \emph{unidirectional} channel: it can be used 
only \REPLACEPfor{SGAM}{170823}{to pay}{for payments} from Alice to Bob. \emph{Bidirectional} channels are defined to overcome this 
limitation as \offchain \payments in both directions are possible. 
Bidirectional payment channels operate in essence as the unidirectional version.\footnote{Technically, a bidirectional 
channel might require that both \users \REPLACEPfor{SGAM}{170823}{fund the deposit}{contribute funds to the deposit} in the opening transaction. However, current proposals~\cite{improve-ln} 
allow bidirectional channels with single deposit funder.  }
The major technical challenge consists in changing the 
direction of the channel. 
In \REPLACEPfor{SGAM}{170823}{our}{the running} example, assume that the current \pc balance $\textit{bal}$ is \{Alice: 4, Bob: 1\} and further assume that Bob 
pays \offchain one bitcoin back to Alice. The new \pc balance $\textit{bal}'$ is \{Alice: 5, Bob: 0\}. 
At this point, Alice benefits from $\textit{bal}'$ balance while Bob benefits from $\textit{bal}$. 
The solution to this discrepancy consists on that  
Bob and Alice make sure that 
any previous balance has been invalidated in favor of the most recent one.  
Different ``invalidation'' techniques have been proposed and we refer the reader to~\cite{ln, dmc, rusty-channel} 
for details. 

%
%
%

\REPLACEPfor{SGAM}{170823}{The current Bitcoin protocol needs to be slightly modified in order to fully support 
payment channels. In particular, transaction malleability~\cite{tx-malleability} must be fixed. 
Nevertheless, this 
fix, along with a set of other interesting new features in Bitcoin, is currently being discussed   
in the Bitcoin community as the Segwit soft-fork~\cite{segwit-benefits, bip101}. Therefore, payment channels 
are currently only fully supported in the Bitcoin testnet and  
the Bitcoin community expects to have them  available also in the main Bitcoin blockchain in the near 
future~\cite{segwit-signaling}. }
{The Bitcoin protocol has been updated recently to fully support \pcs. In particular, transaction 
malleability~\cite{tx-malleability}, along with a set of other interesting new features, have been added 
to the Bitcoin protocol with the recent adoption of Segregated Witness~\cite{segwit-signaling}. This 
event paves the way to the implementation and testing of \pcns on the main Bitcoin blockchain as 
of today~\cite{segwit-pcn-bitcoin-magazine}.
}

\subsection{A Payment Channel Network (PCN)}
\label{sec:nw-paychannels}

A \pcn can be represented as 
a directed graph $\GG = (\VV, \EE)$, where  the set \VV of vertices 
represents the Bitcoin accounts and the set \EE of weighted edges represents the payment 
channels. Every vertex $\uid \in \VV$ has associated a non-negative number that denotes the 
fee it charges for forwarding \payments. 
The weight on a directed edge $(\uid_1, \uid_2) \in \EE$ denotes the amount of 
remaining bitcoins that $\uid_1$ can pay to $\uid_2$. 
For ease of explanation, in the rest of the paper we represent a bidirectional channel 
between $\uid_1$ and $\uid_2$ as two directed edges, one in each 
direction.\footnote{In practice, there is a subtle difference: In a bidirectional channel between Alice and Bob, Bob 
can always return to Alice the bitcoins that she has already paid to him. However, 
if two unidirectional channels are used, Bob is limited to pay to Alice 
the \capacity of the edge Bob $\rightarrow$ Alice, independently of the bitcoins that he has received from 
Alice. 
Nevertheless, our simplification greatly ease the understanding of the rest of 
the paper and proposed algorithms \NEWGfor{SAM}{170823}{can be easily extended to} support bidirectional channels.}
Such a network can be used then to perform \offchain  \payments between two \users that 
do not have an open channel between them but  are connected by a path of open 
payment channels. 

\begin{figure}[t]
\includegraphics[width=\columnwidth]{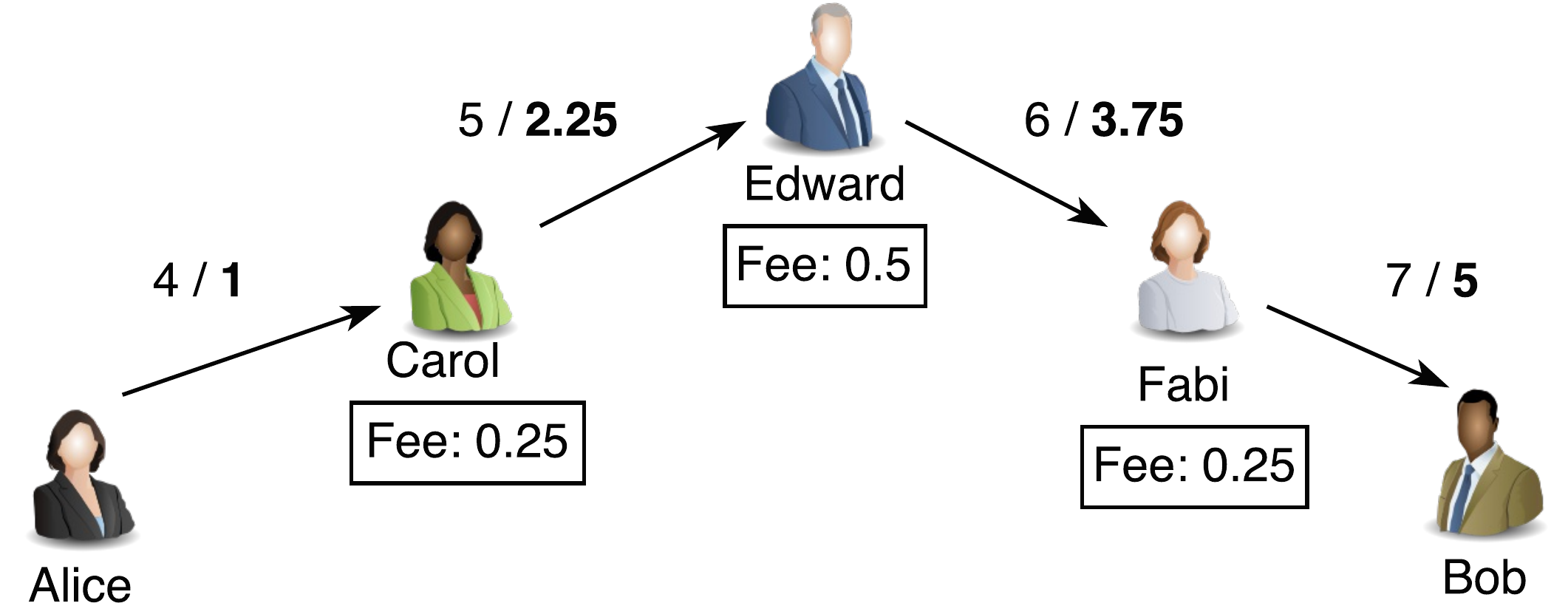}
\caption{{\bf{Illustrative example of a payment in a \pcn.}}
Non-bold (bold) numbers represent the \capacity of the channels before (after) the \payment 
from Alice to Bob. 
Alice wants to pay $2$ bitcoins to Bob via Carol, Edward and Fabi. Therefore, 
she starts the \payment with $3$ bitcoins (i.e., \payment amount plus fees). 
\label{fig:example-payment}
}
\end{figure}

The success of a \payment between two \users depends on the \capacity available along a path 
connecting the two \users and the fees charged by the \users in such path. 
Assume that $s$ wants to pay $\alpha$ bitcoins to $r$ and that they are connected 
through a path $s \rightarrow \uid_1 \rightarrow \ldots \rightarrow \uid_n \rightarrow r$. 
For their \payment to be successful, every link must have a \capacity $\gamma_i \geq \alpha'_i$, 
where $\alpha'_i = \alpha - \sum_{j=1}^{i-1}~\textit{fee}(\uid_j)$ (i.e., the initial \payment value 
minus the fees charged by intermediate \users in the path). 
At the end of a successful \payment, every edge in the path from $s$ to $r$ 
is decreased by $\alpha'_i$.  To ensure that $r$ receives exactly $\alpha$ bitcoins, $s$ must 
start the \payment with a value $\alpha^* = \alpha + \sum_{j=1}^{n}~\textit{fee}(u_j)$.

\REPLACEPfor{SGAM}{170823}{In~\cref{fig:example-payment}}{In the illustrative example of 
payment shown in~\cref{fig:example-payment}}, assume that Alice wants to pay 
Bob $2$ bitcoins. For that she needs to start a payment for a value of $3$ bitcoins ($2$ 
bitcoins plus $1$ bitcoin for the fees charged by \users in the path).  Then the \payment 
is settled as follows: capacity in the 
link Alice $\rightarrow$ Carol is reduced by $3$. Additionally, Carol charges 
\REPLACEPfor{SGAM}{170823}{$0.25$ bitcoin of fee}
{a fee if $0.25$ bitcoins} by 
reducing the \capacity of the link Carol $\rightarrow$ Edward by $2.75$ instead of $3$ bitcoins. 
Following the same reasoning, the link Edward $\rightarrow$ Fabi is set to \capacity $3.75$ and 
the link Fabi $\rightarrow$ Bob is set to $5$.

\REMOVEPfor{SGAM}{170519}{\TODOP{I don't know how important is this}\Payment fees play a key role in a \pcn. First, \payment fees motivate intermediate \users to 
forward a \payment from the payer to the payee as they get rewarded economically in the process. 
Second, since \payment fees are only collected for satisfactorily completed \payments, it is 
in the benefit of intermediate \users to comply with the protocol rules. Finally, \users in the \pcn 
might compete for getting more \payments routed through them by adapting their  fees, 
thereby creating a competition to maintain fees at a low value. }

\REMOVEMfor{SGA}{170216}{To enable \payments between arbitrary \users, the network must offer a good connectivity 
and should maintain a good liquidity, that is, the network must have a good set of paths 
with enough \capacity among their links.  However, since these networks are still at an early stage 
and no practical adoption has been seen yet, connectivity, liquidity or network formation 
remain interesting open challenges.  \TODOP{This paragraph is not really needed. It just 
adds possible interesting points for other works}}\MESSAGEMfor{SGA}{170216}{agreed, it can be discussed in the future work/conclusion section, if we really want to.}

\subsection{\REPLACEGfor{SAM}{170823}{Current}{State-of-the-Art} \pcns}
\label{sec:current-pcns-background}

The concepts of \pcs~\cite{tumblebit, teechan, dmc} and \pcns~\cite{towards-pn} have already attracted attention 
from the research community. 
In practice, there exist several ongoing implementations for a \REPLACEPfor{SGAM}{170823}{network of payment 
channels}{\pcn}  in Bitcoin~\cite{npc-impl1,npc-impl2,npc-impl3,npc-impl4}. Among them, the 
Lightning Network has emerged as  
the most prominent example in the Bitcoin community and an alpha implementation has 
been released recently~\cite{ln-release}. The idea of a \pcn has been 
proposed to improve scalability issues not only in Bitcoin, but also in other blockchain-based 
payment systems such as Ethereum~\cite{raiden-nw}.


\subsubsection{Routing in \pcns}
An important task in \pcns is to find paths with enough \capacity between 
\payer and \payee. \REPLACEGfor{SAM}{170823}{Assume that}{In our setting,} the network topology is known to every \user. This is \REPLACEGfor{SAM}{170823}{possible}{the case} since the 
opening of each payment channel is logged in the publicly available blockchain. Additionally, 
a gossip protocol between \users can be carried out to broadcast the existence of any 
payment channel~\cite{flare}. Further\REPLACEGfor{SAM}{170823}{ assume that}{more,} the fees charged by every \user \REPLACEGfor{SAM}{170823}{are}{can be} 
made public by similar means. 
Under these \REPLACEGfor{SAM}{170823}{assumptions}{conditions}, 
the \payer can locally 
calculate the paths between the \payer and the \payee.  
In the rest of the paper, we assume that the \payer chooses 
the path according to her own criteria. Nevertheless, 
we consider path selection as an interesting but orthogonal 
problem. 

%
%

\subsubsection{\Payments in \pcns}
\label{sec:atomic-payments}

\begin{figure}[b]
\includegraphics[width=\columnwidth]{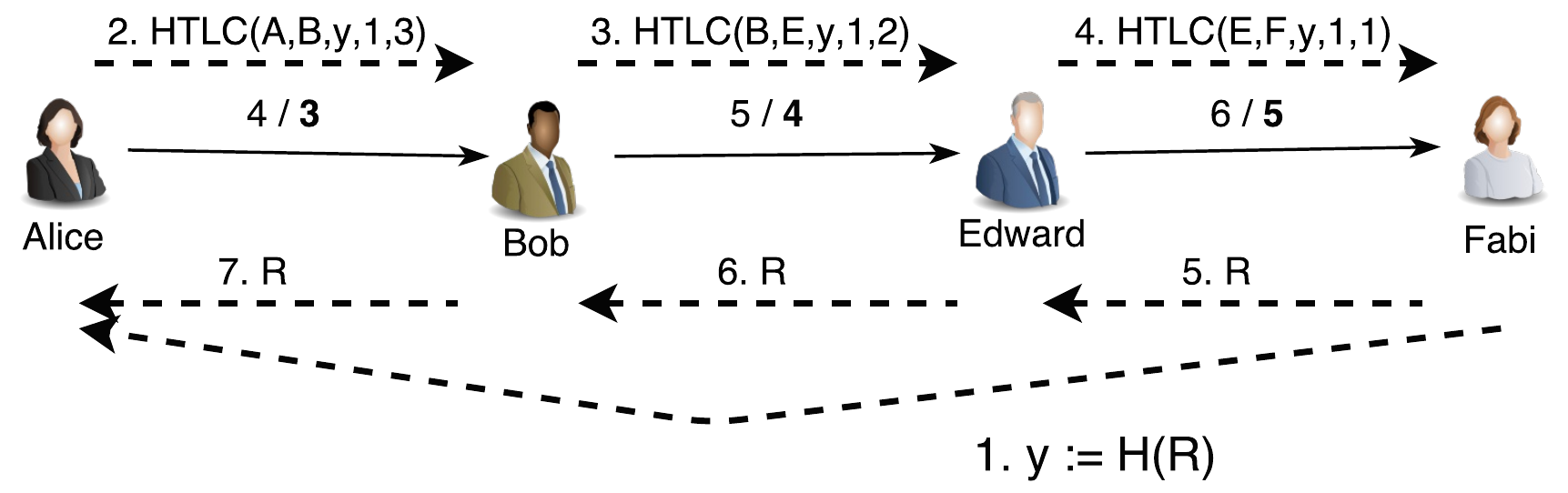}

\caption{{\bf Illustrative example of a \payment from Alice to Fabi for value $1$ using HTLC contract.} 
First, the condition is sent from Fabi to Alice. The condition is then forwarded among \users in the path to hold $1$ bitcoin at 
each \pc. Finally, the receiver shows $R$, releasing the held bitcoin at each \pc. For simplicity, 
we assume that there are no \payment fees in this example.\label{fig:htlc-payment}} 
\end{figure}

A \payment along a path of payment channels is carried out by updating the \capacity 
of each payment channel in the path according to the \payment amount and the 
associated fees (see~\cref{sec:nw-paychannels}). Such \NEWGfor{SAM}{170823}{an} operation rises the important challenge of \emph{atomicity}: 
either the \capacity of all channels in the path is updated or none of the channels 
is changed. Allowing changes in only some of the channels in the path can lead to the loss of bitcoins 
for a \user (e.g., a \user could pay certain bitcoins to the next \user in the path but 
never receive the corresponding bitcoins from the previous neighbor).

The current proposal in the Lightning Network consists of a smart contract called  \emph{Hash Time-Lock Contract} (HTLC)~\cite{ln}. 
This contract locks $x$ 
bitcoins that can be released only if the contract is fulfilled. The contract is defined, in terms 
of a hash value $y := H(R)$\NEWPfor{SGAM}{170824}{ where $R$ is chosen uniformly at random}, the amount of bitcoins $x$ and a timeout $t$, as follows:
\begin{center}
\resizebox{0.9\columnwidth}{!}{
\fbox{
\parbox{0.9\columnwidth}{
{\bf HTLC (Alice, Bob, $y$, $x$, $t$):}
\begin{asparaenum}
\item If Bob produces the condition $R^*$ such that $H(R^*) = y$ before $t$ days,\footnotemark Alice pays Bob $x$ bitcoins.
\item If $t$ days elapse, Alice gets back $x$ bitcoins.
\end{asparaenum}
}}}
\end{center}
\footnotetext{We use days here as in the original
description~\cite{ln}. Instead, recent proposals use \NEWPfor{SGAM}{170823}{the} sequence numbers of blocks \NEWPfor{SGAM}{170823}{as they appear in} the Bitcoin blockchain~\cite{bip68}.}


An illustrative example \NEWPfor{SGAM}{170823}{of the use of HTLC in a payment} is depicted in~\cref{fig:htlc-payment}. For simplicity, we assume that 
there are not \payment fees in this example. First, the payment 
amount (i.e., $1$ bitcoin) is set on hold from the sender to the receiver and then 
released from the receiver to the sender. In a bit more detail, 
after the receiver (Fabi) sends the condition \REMOVEPfor{SGA}{170216}{\TODOM{we should introduce this term before, in the HTLC algorithm}} to the sender (Alice), Alice sets an HTLC 
with her neighbor, effectively setting the payment value (i.e., $1$ bitcoin) on hold. 
Such HTLC is then set at each \pc in the path to the receiver. At this point, the receiver 
knows that the payment value is on hold at each \pc and thus she 
reveals the value $R$, that allows her to fulfill the contract and to settle the new \capacity 
at each \pc\NEWPfor{SGAM}{170823}{ in the path}.

It is important to note that every \user in the path sets the HTLC in
the outgoing \pc with a timeout smaller than the HTLC in the incoming
\pc. In this manner, the \user makes sure that she can pull bitcoins
from her predecessor after her bitcoins have been pulled from her
successor. An offline \user can outsource the monitoring of
fulfillments corresponding to open HTLC
contracts associated to her \pcs~\cite{channel-monitoring}. 

Although HTLC is fully compatible with Bitcoin, its use in practice 
leads to important privacy leaks: It is easy to see that the value of
the hash $H(R)$ uniquely identifies the users that took part in a
specific transaction. This fact has two main implications. First, 
any two colluding \users in a path can trivially derive the fact that 
they took part in the same \payment and this can be leveraged to reconstruct 
the identity of \payer and \payee.\footnote{
As noted in~\cite{bolt}, in a path $A \rightarrow I_1 \rightarrow I_2 \rightarrow I_3 \rightarrow B$, 
only $I_1$ and $I_3$ must collude to recover the identities of $A$ and $B$ as all the contracts in the 
path share the same $H(R)$.
} \MESSAGEGfor{SAM}{170823}{Change h to y in this section!}
Second, 
if the HTLC statements are uploaded
to the blockchain\NEWPfor{SGA}{170518}{ (e.g., due to uncollaborative intermediate \users in the payment path)}, an observer
can easily track the complete path used to 
route the \payment, even if she is not part of the \payment.  \REMOVEPfor{SGA}{170518}{\TODOM{it would be useful here to mention when HTLC statements are supposed to be uploaded on the blockchain, to complete the picture}}  In this work, we propose a novel \newhtlc smart contract that 
avoids this privacy problem while ensuring that no intermediate \user loses 
her bitcoins.  

\NEWPfor{SGA}{170518}{An important issue largely understudied in current \pcns is the handling of concurrent \payments that 
require a \emph{shared} \pc in their paths.
Current proposals simply abort a \payment if the balance at the shared \pc in the path is not enough. However, 
as we show in~\cref{sec:approach}, this approach can lead to a deadlock situation where none of simultaneous 
\payments terminates. We propose a \payment protocol that ensure non-blocking progress, that is, at 
least one of the concurrent \payments terminates. Moreover, we show an inherent tradeoff between concurrency and 
privacy for any fully distributed payment network.} 

\REMOVEPfor{SGA}{170518}{\TODOP{We might want to talk here about how they handle the concurrency and how we do it.}\TODOM{Rather than how we handle it, it would be useful to say here how concurrency is handled in current PCNs and which problems there exist (for privacy, we mention unlinkability, so we should mention some problem here for concurrency). Are there concrete facts to report?}}


%

\section{Problem Definition: Computation and Attacker model}
\label{sec:definition}
In this section, we first formalize a \pcn and underlying 
operations, and discuss the attacker model 
and our security and privacy goals. We then describe 
an ideal world functionality for our proposal, and present a system overview. \NEWG{Throughout the following description we implicitly assume that every algorithm takes as input the blockchain, which is publicly known to all users.} 

\begin{definition}[Payment Channel Network (\pcn)]
\label{def:nw-paychannels}
A \pcn is defined as \REPLACEPfor{SGAM}{170824}{$\nw := \GG(\VV, \EE)$}{graph 
$\GG := (\VV, \EE)$}, where $\VV$ is the set of 
Bitcoin accounts and $\EE$ is the set of currently open payment channels. 
A \pcn \NEWG{ is defined with respect to a blockchain $\blockchain$ and} is equipped 
with the three operations (\openc, \closec, \pay)  described below:

\begin{asparaitem}
\item $\openc(\uid_1, \uid_2, \beta, \timeout, \fee) \rightarrow \{1, 0\}.$ On input two Bitcoin \accs
$\uid_1, \uid_2 \in \VV$, an initial channel \capacity $\beta$, \NEWG{a timeout $\timeout$, and a fee value $\fee$,} if the operation 
is authorized by $\uid_1$, and $\uid_1$ owns at least $\beta$ bitcoins, \openc creates 
a new payment channel $(\pair{\uid_1}{\uid_2}, \beta, \fee, \timeout) \in \EE$, \NEWG{where $\pair{\uid_1}{\uid_2}$ is a fresh channel identifier. Then it uploads it to $\blockchain$ }\REMOVEG{, sets channel's \capacity to $\beta$, sets channel's fee to \fee, 
sets channel's timeout to \timeout, includes the channel in the underlying blockchain} and  
returns $1$. Otherwise, it returns $0$. 

\TODOP{Add a time to solve disputes before the channel is closed}
\item $\closec(\pair{\uid_1}{\uid_2},  \NEWG{\val}) \rightarrow \{1, 0\}.$  On input a \REPLACEG{\pc}{channel identifier}  
$\pair{\uid_1}{\uid_2}$ and \REPLACEG{its current}{a} balance \REPLACEG{\cstate}{$\val$} \NEWPfor{SGA}{170518}{ (i.e., the distribution of bitcoins locked in the channel between $\uid_1$ and $\uid_2$)}, 
if the operation is authorized by both $\uid_1$ and $\uid_2$, \closec removes the \NEWG{corresponding} channel from $\GG$, includes the balance 
\REPLACEG{\cstate}{$\val$} in \REPLACEG{the underlying blockchain}{$\blockchain$} and 
returns $1$. Otherwise, it returns $0$.  \REMOVEPfor{SGA}{170518}{\TODOM{perhaps we should say here what a balance \cstate is.}}


\item $\pay((\pair{s}{\uid_1}, \ldots, \pair{\uid_n}{r}), \val) \rightarrow \{1, 0\}.$ On input 
a list of \REPLACEG{\pcs}{channel identifiers} $ (\pair{s}{\uid_1}, \ldots, \pair{\uid_n}{r})$ and a \payment value $\val$, 
if the \pcs form a path from the sender ($s$) to the receiver ($r$) and each \pc $\pair{\uid_i}{\uid_{i+1}}$ in the path has 
at least a current balance   $\gamma_i \geq \val'_i$, 
where $\val'_i = \val - \sum_{j=1}^{i-1}~\textit{fee}(\uid_j)$, the \pay operation decreases the current balance for each 
\pc $\pair{\uid_i}{\uid_{i+1}}$ by $\val'_i$ and returns $1$. Otherwise, none of the balances at the \pcs 
is modified and the \pay operation returns $0$.

\REMOVEPfor{SGA}{170519}{Hereby, we say that a \payment  \emph{successfully terminates} when \pay returns $1$. 
We say that a \payment \emph{unsuccessfully terminates} when \pay returns $0$  and 
all the balances held at each \pc in the path are released. In any other case, we 
say that the \payment is \emph{aborted}. }

%

\end{asparaitem}
\end{definition}



\paragraph{Attacker model.} We consider a computationally efficient attacker that can shape the
network at her will by spawning users and corrupting an arbitrary
subset of them in an adaptive fashion. Once a user is corrupted, its
internal state is given to the attacker and all of the following
messages for that user are handed over to the attacker. \NEWMfor{SGA}{170518}{On the other hand, we assume that the communication between two non-compromised users sharing a \pc is confidential (e.g., through TLS). } Finally, the attacker
can send arbitrary messages on behalf of corrupted users. 
\REMOVEMfor{SGA}{170518}{Therefore, we consider a local active adversary that does not control the 
	communication among honest users and therefore cannot determine 
	whether two honest \users exchange a message. \TODOP{I am not sure 
		this is the correct way of writing the last sentence}}

\subsection{Security \REPLACEPfor{SGAM}{170823}{Goal}{and Privacy Goals}}
\label{sec:attacker-model}

\REMOVEPfor{SGA}{170519}{In the following, we characterize the security and privacy notions of interest in 
a \pcn. We defer their formal treatment to \cref{sec:ideal-world}. }
\NEWAfor{SG}{170519}{Against the above adversary, we identify the following security and privacy notions of interest:}
\begin{asparaitem}

\item {\bf \Correctbal.} Intuitively, \correctbal guarantees that 
any honest intermediate \user taking part in a \pay operation (as specified in Definition~\ref{def:nw-paychannels}) 
does not lose \money even when all other \users involved in  the \pay operation are corrupted.

\item {\bf Serializability.} We require that the executions of \pcn are \emph{serializable}~\cite{Pap79-serial}, 
i.e., for every concurrent execution of pay operations, 
there exists an \emph{equivalent} sequential execution. 
\item {\bf (Off-path) Value Privacy.}  Intuitively, value privacy
guarantees that for a \pay operation involving only honest \users, 
corrupted \users outside the payment path learn no information about the payment value.

\item {\bf (On-path) Relationship Anonymity~\cite{Pfitzmann08anonymity,Anoa-CSF}.} \NEWGfor{SAM}{170519}{
Relationship anonymity requires that, 
given two simultaneous successful \pay operations of the form $\set{\pay_i((\pair{s_i} {\uid_1}, \ldots, \pair{\uid_n}{r_i}), \val)}_{i \in [0,1]}$
with at least one honest intermediate \user $\uid_{j \in [1,n]}$,
corrupted intermediate \users cannot determine the pair $(s_i, r_i)$ for a given $\pay_i$ with 
probability better than $1/2$.}
\end{asparaitem}


\subsection{Ideal World Functionality}
\label{sec:ideal-world}

\REMOVEPfor{SGAM}{170519}{In this section, we first describe the ideal world functionality for a \pcn. We then discuss how this ideal world 
captures the security and privacy properties of interest in a \pcn. }

\begin{figure*}[tb]
\begin{mdframed}
{\bf Open channel:} On input $(\openchannel, \pair{\uid}{\uid'}, \val, \uid', \timeout, \fee)$ from a user $\uid$, the $\ttp$ checks whether $\pair{\uid}{\uid'}$ is well-formed (contains valid identifiers and it is not a duplicate) and eventually sends $(\pair{\uid}{\uid'}, \val, \timeout, \fee)$ to $\uid'$, who can either abort or authorize the operation. In the latter case, $\ttp$ appends the tuple $(\pair{\uid}{\uid'}, \val, \timeout, \fee)$ to $\blockchain$ and the tuple $(\pair{\uid}{\uid'}, \REPLACEPfor{SGAM}{170519}{0}{\val}, \timeout, \cstate)$ to $\mathcal{L}$, for some random $\cstate$. $\ttp$ returns $\cstate$ to $\uid$ and $\uid'$.

\smallskip
{\bf Close channel:} On input $(\closechannel, \pair{\uid}{\uid'}, \cstate)$ from a user $\in \{ \uid', \uid \}$ the ideal functionality $\ttp$ parses $\blockchain$ for an entry $(\pair{\uid}{\uid'}, \val, \timeout, \fee)$ and $\mathcal{L}$ for an entry $(\pair{\uid}{\uid'}, \val', \timeout', \cstate)$, for $\cstate \ne \bot$. If $\pair{\uid}{\uid'} \in \mathcal{C}$ \REMOVEGfor{SAM}{170519}{or $\pair{\uid}{\uid'}=\bot$} \NEWPfor{SGAM}{170519}{or $\timeout > |\blockchain|$ or $ \timeout' > |\blockchain|$} the functionality aborts. 
	 \REPLACEPfor{SGAM}{170519}{If $\timeout \ge |\blockchain|$ and $\timeout' \ge |\blockchain|$}{Otherwise,} $\ttp$ adds the entry\REMOVEPfor{SGAM}{170519}{s} $(\REMOVEPfor{SGAM}{170519}{\bot, }\pair{\uid}{\uid'}, \uid', \val', \timeout')$\REMOVEPfor{SGAM}{170519}{ and $(\bot, \pair{\uid}{\uid'}, \uid, (\val - \val'), \timeout')$} to $\blockchain$  and adds 
	$\pair{\uid}{\uid'}$ to $\mathcal{C}$. $\ttp$ then notifies both users involved with a message $(\pair{\uid}{\uid'}, \bot, \cstate)$.

\smallskip
{\bf Payment:} On input $(\pay, \val, 
	(\pair{\uid_0}{\uid_1}, \dots, \pair{\uid_n}{\uid_{n+1}}), (\timeout_0, \dots, \timeout_n))$ from a user $\uid_0$, $\ttp$ executes the following interactive protocol: 
	
	\begin{enumerate}
	\item  For all $i \in \{1, \dots, (n+1)\}$ $\ttp$ samples a random $\cstate_i$ and parses $\blockchain$ for an entry of the form 
		$(\pair{\uid_{i-1}}{\uid'_{i}}, \val_i, \timeout_i', \fee_i)$. If such an entry does exist $\ttp$ sends the tuple $(\cstate_i, \cstate_{i+1},$ 
		$\pair{\uid_{i-1}}{\uid_i}, \pair{\uid_i}{\uid_{i+1}}, \val \REPLACEPfor{SGAM}{170519}{+}{-} \sum^n_{j=i}\fee_j, \timeout_{i-1}, \timeout_i)$ to the user $\uid_i$ via an anonymous channel 
		(for the specific case of the receiver the tuple is only $(\cstate_{n+1}, \pair{\uid_{n}}{\uid_{n+1}} , \val, \timeout_n)$). Then $\ttp$ checks whether 
		for all entries of the form 
		$(\pair{\uid_{i-1}}{\uid_{i}}, \val_i', \cdot, \cdot) \in \mathcal{L}$ it holds that $\REPLACEPfor{SGAM}{170519}{(\val_i-\val_i')}{\val'_i} \ge \left(\val \REPLACEPfor{SGAM}{170519}{+}{-} \sum^n_{j=i}f_j\right)$ and that $\timeout_{i-1} \ge \timeout_i$. If this is the case $\ttp$ adds 
		$d_i = (\pair{\uid_{i-1}}{\uid_{i}}, (\val_i' \REPLACEPfor{SGAM}{170519}{+}{-} \NEWPfor{SGAM}{170519}{(}\val \REPLACEPfor{SGAM}{170519}{+}{-} \sum^n_{j=i}\fee_j\NEWPfor{SGAM}{170519}{)}), \timeout_i, \bot)$  to $\mathcal{L}$, where 
		$(\pair{\uid_{i-1}}{\uid_{i}}, \val_i', \cdot , \cdot) \in \mathcal{L}$ is the entry with the \REPLACEGfor{M}{170519}{highest}{lowest} $\val_i'$. If any of the conditions above is not met, $\ttp$ removes from $\mathcal{L}$ all the entries $d_i$ added in this phase and aborts. 
		 
		 \item For all $i \in \{(n+1), \dots, 1\}$ $\ttp$ queries all $\uid_i$ with $(\cstate_i, \cstate_{i+1})$, through an anonymous channel. Each user can reply with either $\top$ or $\bot$. Let $j$ be the index of the user that returns $\bot$ such that for all $i > j: \uid_i$ returned  $\top$. If no user returned $\bot$ we set $j = 0$.

		\item For all $i \in \{ j+1, \dots, n\}$ the ideal functionality $\ttp$ updates $d_i \in \mathcal{L}$ (defined as above) to $(-, -, -, \cstate_i)$ and notifies the user of the success of the operation with with some distinguished message 
		$(\success, \cstate_i, \cstate_{i+1})$. For all $i \in \{0, \dots, j\}$ (if $j \ne 0$) $\ttp$ removes $d_i$ from $\mathcal{L}$ and notifies the 
		user with the message $(\bot, \cstate_i, \cstate_{i+1})$.
	\end{enumerate}
\end{mdframed}
\caption{Ideal world functionality for \pcns.\label{fig:ideal-world}}
\end{figure*}

\pparagraph{Our Model} The \users of the network are modeled as interactive Turing machines that communicate with a trusted functionality $\ttp$ via secure and authenticated channels. We model the attacker $\adv$ as a probabilistic polynomial-time machine that is given additional interfaces to add users to the system and corrupt them. $\adv$ can query those interfaces adaptively and at any time. Upon corruption of a user $\uid$, the attacker is provided with the internal state of $\uid$ and the incoming and outgoing communication of $\uid$ is routed thorough $\adv$.

\pparagraph{Assumptions} We model anonymous communication between any two users of the network as an ideal functionality $\ttp_\mathsf{anon}$, as proposed in~\cite{CL05}. Furthermore, we assume the existence of a  blockchain $\blockchain$ that we model as a trusted append-only bulletin board (such as~\cite{bulletinboard}):  The corresponding ideal functionality $\ttp_\blockchain$ maintains $\blockchain$ locally and updates it according to the transactions between users. At any point in the execution, any user $\uid$ of the \pcn can send a distinguished message $\readblock$ to $\ttp_\blockchain$, who sends the whole transcript of $\blockchain$ to $\uid$. We denote the number of entries of $\blockchain$ by $|\blockchain|$. In our model, time corresponds to the number of entries of the blockchain $\blockchain$, i.e., time $t$ is whenever $|\blockchain| = t$. Our idealized process $\ttp$ uses $\ttp_\mathsf{anon}$ and $\ttp_\blockchain$ as subroutines, i.e., our protocol is specified in the $(\ttp_\mathsf{anon}, \ttp_\blockchain)$-hybrid model. \NEWGfor{SAM}{170823}{Note that our model for a blockchain is a coarse grained abstraction of the reality and that more accurate formalizations are known in the literature, see~\cite{hawk}. For ease of exposition we stick to this simplistic view, but one can easily extend our model to incorporate more sophisticated abstractions.}

\pparagraph{Notation} Payment channels in the Blockchain $\blockchain$ are of the form    
$(\pair{\uid}{\uid'}, \val, \timeout, \fee)$, where $\pair{\uid}{\uid'}$ is a unique channel identifier, $\val$ is the capacity of the channel, $t$ is the expiration time of the channel, and $f$ is the associated fee. For ease of notation we assume that the identifiers of the users $(\uid, \uid')$ are also encoded in $\pair{\uid}{\uid'}$. \NEWGfor{SAM}{170823}{We stress that any two users may have multiple channels open simultaneously.} The functionality maintains two additional internal lists $\mathcal{C}$ and $\mathcal{L}$. The former is used to keep track of the closed channels, while the latter records the off-chain payments. Entries in $\mathcal{L}$ are of the form $(\pair{\uid}{\uid'}, \val, t, h)$, where $\pair{\uid}{\uid'}$ is the corresponding channel, $\val$ is the amount of credit used, 
$\timeout$ is the expiration time of the payment, and $\cstate$ is the identifier for this entry.

\pparagraph{Operations} In~\cref{fig:ideal-world} we describe the interactions between $\ttp$ and the users of the \pcn. For simplicity, we only model unidirectional channels, 
although our functionality can be easily extended to support also bidirectional channels. The execution of our simulation starts with $\ttp$ initializing a pair of local empty lists $(\mathcal{L}, \mathcal{C})$. Users of a \pcn can query $\ttp$ to open channels and close them to any valid state in $\mathcal{L}$. On input a value $\val$ and a set of payment channels $(\pair{\uid_0}{\uid_1}, \dots, \pair{\uid_n}{\uid_{n+1}})$ from some user $\uid_0$, $\ttp$ checks whether the path has enough capacity (step 1) and initiates the payment. Each intermediate user can either allow the payment or \REPLACEPfor{SGAM}{170823}{not}{deny it}. Once \REPLACEPfor{SGAM}{170823}{it}{the payment} has reached the receiver, each user can again decide to interrupt the flow of the payment (step 2), i.e., pay instead of the sender. Finally $\ttp$ informs the involved nodes of the success of the operation (step 3) and adds the updated state to $\mathcal{L}$ for the corresponding channels.


\pparagraph{Discussion} Here, we show that our ideal functionality captures the security and privacy properties of interest for a \pcn. 

\begin{asparaitem}

\item \emph{\Correctbal.} 
\NEWG{Let $\uid_i$ be any intermediate hop in a payment $\pay((\pair{s}{\uid_1}, \ldots, \pair{\uid_n}{r}), \val)$. 
$\ttp$ locally updates in $\mathcal{L}$ the channels corresponding to the incoming and 
outgoing edges of $\uid_i$ such that the total balance of $\uid_i$ is increased by the \money she sets as a fee, unless the user actively prevents it (step 2).} Since $\ttp$ is trusted, \correctbal follows.

\item \emph{Serializability.} 
\NEWG{Consider for the moment only single-hop payments. 
It is easy to see that the ideal functionality executes them serially, i.e., any two concurrent payments can only happen on different links. 
Therefore one can trivially find a scheduler that performs the same operation in a serial order (i.e., in any order). 
By \correctbal, any payment can be represented as a set of atomic single-hop payments and thus serializability holds.}
 
\item \emph{Value Privacy.} In the ideal world, users that do not lie in the payment path are not contacted by $\ttp$ and therefore they learn nothing about the transacted value (for the off-chain payments). 
\item \emph{Relationship Anonymity.} 
\NEWGfor{SAM}{170519}{Let $\uid_i$ be an intermediate hop in a payment. In the interaction with the ideal functionality, $\uid_i$ is only provided with a unique identifier for each payment. In particular, such an identifier is completely independent from the identifiers of other users involved in the same payment. It follows that, as long as at least one honest user $\uid_i$ lies in a payment path, any two simultaneous payments over the same path for the same value $\val$ are indistinguishable to the eyes of the user $\uid_{i+1}$. This implies that any {\em proper} subset of corrupted intermediate hops, for any two successful concurrent payments traversing all of the corrupted nodes, cannot distinguish in which order an honest $\uid_i$ forwarded the payments. Therefore such a set of corrupted nodes cannot determine the correct sender-receiver pair with probability better than $1/2$.}
\end{asparaitem}

\pparagraph{UC-Security} Let $\mathsf{EXEC}_{\tau, \adv, \mathcal{E}}$ be the ensemble of the outputs of the environment
  $\mathcal{E}$ when interacting with the adversary $\adv$ and parties
  running the protocol $\tau$ (over the random coins of all the
  involved machines).

\begin{definition}[UC-Security]\label{def:UCsec}
A protocol $\tau$ UC-realizes an ideal functionality $\ttp$ if
for any adversary $\adv$ there exists a simulator $\mathcal{S}$ such that for
any environment $\mathcal{E}$ the ensembles $\mathsf{EXEC}_{\tau, \adv,
  \mathcal{E}}$ and $\mathsf{EXEC}_{\ttp, \mathcal{S}, \mathcal{E}}$ are computationally
indistinguishable.
\end{definition}
\pparagraph{Lower bound on byzantine users in \pcn} 
We observe that in \pcns that contain channels in which both the users are \emph{byzantine} (\`a la malicious)~\cite{LF82}, there is an inherent cost to concurrency. 
Specifically, in such a \pcn, if we are providing non-blocking progress, i.e., 
at least one of the concurrent \payments terminates, then it is impossible to provide serializability in \pcns (cf. Figure~\ref{fig:bottleneck}
in Appendix~\ref{sec:concurrency-proofs-lemmas}).
Thus, henceforth, all results and claims in this paper assume that in any \pcn execution, there does not exist a channel in which both its users are byzantine.
%
\begin{lemma}
\label{lemma:byzantine}
There does not exist any serializable protocol for the \pcn problem that provides non-blocking progress if there exists a \pc in which both \users
are byzantine.
\end{lemma}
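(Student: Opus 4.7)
The plan is to proceed by contradiction via an adversarial indistinguishability argument of the style used for two-process consensus impossibility under byzantine faults. Suppose that a serializable protocol $\Pi$ with non-blocking progress exists for the \pcn problem under the assumption that some channel $\pair{b_1}{b_2} \in \EE$ has \emph{both} endpoints byzantine. I would construct two concurrent payments $\pay_A$ and $\pay_B$ issued by honest senders $s_A, s_B$ toward honest receivers $r_A, r_B$, where both paths are forced through the bottleneck link $\pair{b_1}{b_2}$ (as in Figure~\ref{fig:bottleneck} in Appendix~\ref{sec:concurrency-proofs-lemmas}), and whose aggregate demand strictly exceeds the \capacity of that channel. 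Thus at most one of the two can legitimately succeed.

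Next I would exploit the fact that $b_1$ and $b_2$ are both controlled by the adversary, so the pair can coordinate arbitrarily, including simulating internally any message they would have exchanged with each other, and they need not behave consistently with what they actually ``committed'' on that shared channel. Concretely, I would design two executions: in execution $E_A$, the byzantine pair acts toward the honest predecessors/successors of $\pay_A$ exactly as if $\pay_A$ had won the race (updating balances on $\pair{s_A}{b_1}$ and $\pair{b_2}{r_A}$ in their view); in execution $E_B$, symmetrically for $\pay_B$. Non-blocking progress forces $\Pi$ to terminate one of the two payments in each execution, so by a standard scheduling/indistinguishability step I would construct a third execution $E^*$ in which the honest participants of \emph{both} payments observe a local view identical to the one they have in $E_A$ (resp. $E_B$), and therefore both payments terminate \success from the viewpoint of the honest endpoints and their honest neighbors.

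The final step is to argue that $E^*$ admits no equivalent sequential execution. Any serialization would have to order $\pay_A$ and $\pay_B$, but whichever payment is placed second would find, along the honest prefix of its path, insufficient \capacity on $\pair{b_1}{b_2}$ after the first payment is applied, so the single-hop update on that link required by \correctbal (as captured in the ideal functionality of \cref{fig:ideal-world}, step~1) cannot be carried out. Hence no serial order is consistent with the outputs of $E^*$, contradicting serializability; this yields the claimed impossibility.

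The main obstacle I expect is not the counting argument about \capacities, which is routine once the bottleneck is fixed, but carefully arguing the indistinguishability step that glues $E_A$ and $E_B$ into $E^*$: one has to show that honest users on disjoint portions of the two payment paths cannot, through the blockchain $\blockchain$ or through their (confidential) channel messages, detect that the byzantine pair is presenting inconsistent views, and that the anonymous-channel abstraction $\ttp_\mathsf{anon}$ together with the adversary's freedom to corrupt $b_1, b_2$ adaptively suffices to realize both simulated views simultaneously.
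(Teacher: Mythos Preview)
Your proposal is correct and lands on the same core idea as the paper: a bottleneck channel with two byzantine endpoints, two concurrent payments whose combined demand exceeds that channel's capacity, the byzantine pair causing \emph{both} payments to appear successful to the honest participants, and then the observation that no serial schedule can reproduce two successes.

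The difference is one of machinery. The paper's proof is a short paragraph: it takes the byzantine sender Alice and her byzantine neighbor Bob (so the sender herself is corrupted), two payments from Alice to distinct honest receivers, and simply asserts that Alice and Bob ``can respond with an incorrect \pc capacity'' so that both payments complete --- the contradiction with serializability is then immediate. There is no three-execution indistinguishability scaffolding ($E_A, E_B, E^*$), no appeal to $\ttp_\mathsf{anon}$ or to confidentiality of channel messages, and non-blocking progress is used only implicitly (to rule out the trivial escape of blocking forever). Your version is the standard distributed-computing way of making the ``byzantines can lie to both sides'' step rigorous, and it does buy you a cleaner justification of why honest users cannot detect the inconsistency; the paper, by contrast, treats that step as self-evident once both endpoints of the bottleneck are adversarial. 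Your choice of honest senders with a byzantine \emph{intermediate} pair is also slightly different from the paper's setup (byzantine sender), but either configuration works for the argument.
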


%
%
%


%

\subsection{ Key Ideas and System Overview}
\label{sec:approach}

In the following, we give a high-level overview on how we achieve private and concurrent payments in \pcns.

\subsubsection{Payment Privacy}
The payment
operation must ensure the security and privacy properties of 
interest in a \pcn, namely \correctbal, value privacy and relationship 
anonymity. 
 A na\"ive approach towards achieving \correctbal would be to
use HTLC-based payments
(see~\cref{sec:atomic-payments}). This solution is however in inherent
conflict with anonymity: It is easy to see that contracts
belonging to the same transactions are \emph{linkable} among each
other, since they encode the same condition ($\cstate$) to release the payment. 
Our proposal, called \newhtlc, aims to remove this link among hops while
maintaining the full compatibility with the Bitcoin network. 

The idea
underlying \newhtlc is the following: At the beginning of an
$n$-hop transaction the sender samples $n$-many independent \REPLACEAfor{SGM}{170613}{points}{strings}
$(x_1, \dots, x_n)$. Then, for all $i \in 1, \ldots, n$, she sets
$y_i = H\left(\bigoplus^{n}_{j=i}x_j\right)$, where $H$ is an
arbitrary hash function. That is, each $y_i$ is the result of 
applying the function $H$ to all of the input values $x_j$ for
$j \ge i$ in an XOR combiner. The sender then provides the receiver
with $(y_n, x_n)$ and the $i$-th node with the tuple
$(y_{i+1},y_i, x_i)$. In order \REPLACEPfor{SGAM}{170823}{not to break}{to preserve} anonymity, the sender
communicates those values to the intermediate nodes over an anonymous channel. 
Starting from the
\emph{sender}, each pair of neighboring nodes $(u_{i+1},u_i)$ defines
a standard HTLC on inputs $(u_{i},u_{i+1}, y_i, b, t)$, where $b$ and
$t$ are the amount of bitcoin and the timeout parameter,
respectively. Note that the release conditions of the contracts are
uniformly distributed in the range of the function $H$ and therefore
the HTLCs of a single transaction are independent 
\REPLACEPfor{SGAM}{170823}{one another}{from each other}. Clearly, the mechanism described above works fine as long as
the sender chooses  each value \NEWPfor{SGAM}{170823}{$y_i$} according to the specification of the
protocol. We can enforce an honest behavior by including \REMOVEGfor{SAM}{170823}{recent
}non-interactive zero-knowledge proofs~\cite{zkboo}. \TODOP{I think we might 
need a figure for  this} \MESSAGEMfor{SPGA}{170518}{I agree}


\subsubsection{Concurrent \Payments}
It is possible that two (or more) simultaneous \payments share a \pc in their \payment 
paths in such a manner that none of the \payments goes through. 
In the example depicted in~\cref{fig:blocking-tx}, the \payment from Alice to Gabriel 
cannot be carried out as the \capacity in the \pc  between Fabi and Gabriel is already locked 
for the \payment from Bob to Edward. Moreover, this second \payment cannot be 
carried out either as the \capacity on the \pc between Carol and Edward is already 
locked. \NEWPfor{SGAM}{170518}{This deadlock situation is a generic problem of \pcns, where a \payment 
is aborted as soon as there exists a \pc in the path without enough capacity.}

 \REMOVEPfor{SGAM}{170518}{\TODOM{isn't this a general problem of PCNs? Then we should prominently say that (e.g., here and in the intro) since we are solving an existing problem (even for the non-private setting) }}

\begin{figure}[tb]
\includegraphics[width=0.9\columnwidth]{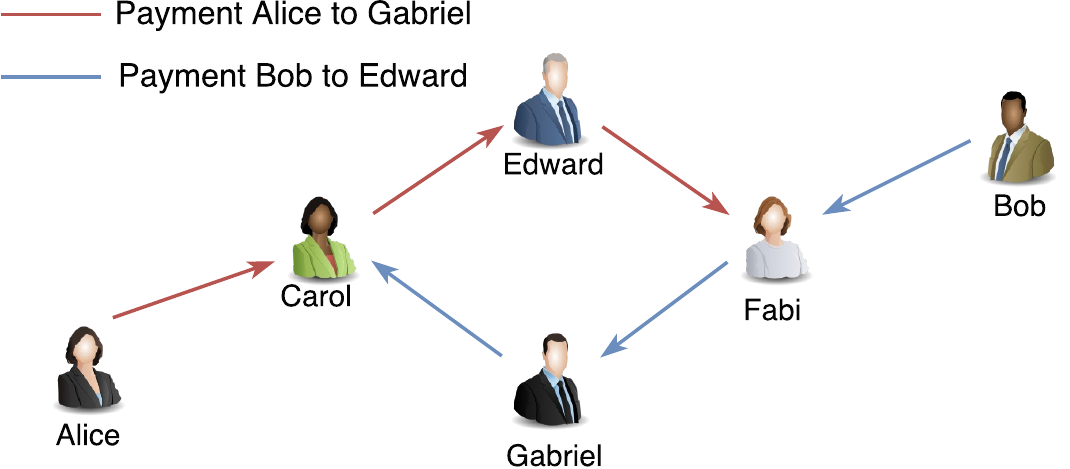}
\caption{ {\bf{Illustrative example of two blocking \payments:}} Alice to Gabriel (red) and 
Bob to Edward (blue). For simplicity, assume each \payment pays $1$ bitcoin and each 
payment channel has \capacity $1$\NEWPfor{SGAM}{170823}{ bitcoin}. Each 
\REPLACEPfor{SGAM}{170823}{link}{\pc} is colored with the \payment that has reached it 
first. In this deadlock situation, none of the \payments can continue further in the path and 
cannot be trivially completed. \label{fig:blocking-tx}}
\end{figure}

\pparagraph{Blocking \Payments (\sysname)} A best-effort solution for avoiding this \emph{deadlock} consists on letting both \payments fail. Aborted \payments do not affect 
the balance of the involved \users as the receiver would not disclose the release condition 
for the locked \pcs. Therefore, involved \pcs would get unlocked only after the corresponding timeout and 
bitcoins are sent back to 
the original owner.

The  sender of an aborted \payment can then 
randomly choose a waiting period to reissue the \payment. 
Although the \emph{blocking} mechanism closely resembles the practice of \users in  others 
payment networks such as Ripple~\cite{ripple-concurrency} or SilentWhispers~\cite{silentwhispers}, 
it might degrade transaction throughput in a 
fully decentralized  \pcn.

\pparagraph{Non-blocking Payments (\sysnamenonblocking)} An alternative 
\NEWPfor{SGAM}{170823}{solution} consists on a  \emph{non-blocking} solution where 
at least one out of a set of concurrent \payments completes. Our approach 
to achieve it assumes that there exists a global ordering of \payments (e.g., by 
a global \payment identifier). In a nutshell, \users can queue \payments with higher identifier than 
the current one ``in-flight'', and abort \payments with lower identifiers. This ensures that 
either the current in-flight \payment completes or one of the queued \payments would do, 
as their identifiers are higher. 
\REMOVEPfor{SGAM}{170518}{\TODOP{We could say here how this solves the deadlock 
in figure 4. It might repetitive and a waste of space. Opinions?}\MESSAGEMfor{SPGA}{170518}{let's briefly mention that}}

\REMOVEPfor{SGAM}{170519}{However, the requirement of a common payment identifier among all the \users in a 
payment path is at odds with anonymity, similar to what we described with the simple HTLC contract. 
We show that any possible non-blocking solution in a decentralized \pcn 
mandates such a global identifier. Therefore, there is 
an interesting,  inevitable tradeoff between the handling of concurrent \payments and the 
desired privacy guarantees. In this state of affairs, we define two alternative protocols:  
\sysname (blocking \payments) and \sysnamenonblocking (non-blocking \payments). }


%

\section{\sysname: Our Construction}
\label{sec:construction}

In this section, we  introduce the cryptographic building blocks required for our construction (\cref{sec:building-blocks}),  
we describe the details for the \newhtlc contract (\cref{sec:multihop-htlc}), 
we detail the constructions for \pcn operations (\cref{sec:protocol-details}), analyze its security and 
privacy (\cref{sec:fulgor-sec-priv-analysis}) and conclude  with a few 
remarks (\cref{sec:discussion}).

\pparagraph{Notation}
\REMOVEPfor{SGAM}{170519}{\TODOP{This notation paragraph seems from an old version to me}
We denote by $\lambda$ the security parameter of our system and we use
the standard definition for a \emph{negligible} function. We write $s
\gets S$ to denote the sampling of a random element $s$ from a set $S$. We
denote by $x||y$ the concatenation of two strings $x$ and $y$ and by
$[n]$ the set of integers $\{ 1, \dots, n\}$. We write the set of all
possible permutations of a string of length $n$ as $\mathcal{P}^n$.
We denote by $\ggroup$  a finite multiplicative cyclic group
and $g$ is a generator of $\ggroup$.}
\REPLACEPfor{SGAM}{170519}{Additionally, w}{
We denote by $\lambda$ the security parameter of our system and we use
the standard definition for a \emph{negligible} function. \REMOVEG{We denote by $\Delta$ the difference between 
two timeouts (\timeout, \timeout') associated to two HTLC contracts that ensures they are fulfilled sequentially. }
W}e denote by \ops the possible events in a \pc due to a \payment. 
The \ops  $\hold$ signals to lock the balance in the \pc corresponding to the \payment value. 
The \ops $\release$ signals the release of \REPLACEPfor{SGAM}{170823}{\capacity}{locked funds} in the \pc due to the abortion of a \payment.  
Correspondingly, the \ops $\confirm$ signals the  confirmation of a \payment accepted by 
the receiver. 

For ease of notation, we assume that \users identifiers $(\uid_i, \uid_{i+1})$ can be extracted 
from the channel identifier $\pair{\uid_i} {\uid_{i+1}}$.

\pparagraph{\NEWG{System }Assumptions}
We assume that every \user in the \pcn is aware of the complete network topology, that is, the set of all \users and the existence of a 
payment channel between every pair of \users. \NEWPfor{SGAM}{170519}{We further assume that the sender of a \payment chooses a \payment path to the 
receiver according to her own criteria.\REMOVEGfor{SAM}{170823}{ We consider path selection criteria orthogonal to this work.}} 
The current value on each \pc is not published but instead kept locally by the \users 
sharing a payment channel as otherwise privacy is trivially broken. We further assume that every \user is aware of the \payment fees charged 
by each other \user in the \pcn.

This can be accomplished in practice. The opening of a \pc between two \users requires to add a transaction in the blockchain that includes both \user identifiers. Therefore, the topology of the \pcn is trivially leaked. Moreover, the transaction used to open a \pc can 
contain user-defined data~\cite{bitcoin-opreturn}  
so that each \user can embed 
her own \payment fee. In this manner, each \user can proactively gather updated information about the network topology and fees from the blockchain itself 
or be disseminated by a gossip protocol~\cite{flare, ln-scalability-challenges}.


We  further assume
that pairs of users sharing a \pc communicate through secure and
authenticated channels (such as TLS), which is easy to implement given
that every user is uniquely identified by a public key. Also we assume
that the sender and the receiver of a (possibly indirect) transaction
can communicate through a secure and direct channel. Finally, 
we assume that the sender of a \payment can create an anonymous 
payment channel with each intermediate \user. The IP address where to 
reach  each \user could be encoded in the channel creation transaction and 
therefore logged in the blockchain. 
We note that our protocol  is completely
parametric with respect to the routing, therefore any onion
routing-like techniques would work in this context.

We consider the \emph{bounded synchronous} communication 
setting~\cite{AW04}.  In such communication model,  
time is divided into fixed communication rounds 
and it is assumed that all messages sent by a \user in a round are available to the 
intended recipient within a bounded number of steps in an execution. Consequently, absence of a message
indicates absence of communication from a \user during the round. 
In practice, 
this can be achieved with loosely synchronized clocks among the \users in the 
\pcn~\cite{ACS86}.


Finally, we assume that there is a total order \NEWPfor{SGAM}{170823}{among} the \users 
(e.g., lexicographically sorted by their public verification keys). 

\subsection{Building Blocks}
\label{sec:building-blocks}

\pparagraph{Non-Interactive Zero-Knowledge} Let
$\relation: \{ 0,1\}^* \times \{0,1\}^* \to \{0,1\}$ be an NP
relation, and let $\lang$ be the set of positive instances for
$\relation$, i.e.,
$\lang = \{x \mid \exists w \text{ s.t. } \relation(x,w) =1\}$. A
non-interactive zero-knowledge proof for $\relation$ consists of a
single message from a prover $\prover$ to a verifier $\verifier$. The
prover $\prover$ wants to \NEWGfor{SAM}{170823}{compute a proof $\pi$ that} convinces the verifier $\verifier$ that a
certain statement $x \in \lang$. We allow the prover to run on an
extra private input $w$ such that $\relation(x,w) = 1$. The verifier
can either accept or reject, depending on $\pi$. A
$\nizk$ is complete if the $\verifier$ always accepts honestly
computed $\pi$ for a statement $x \in \lang$ and it is sound if
$\verifier$ always rejects any $\pi$ for all $x \not\in \lang$,
except with negligible probability. Loosely speaking, a $\nizk$ proof
is zero knowledge if the verifier learns nothing from $\pi$ beyond the
fact that $x \in \lang$. Efficient $\nizk$ protocols are known to
exist \REMOVEGfor{SAM}{170616}{in the common reference string model~\cite{grothsahai} and }in the
random oracle model~\cite{zkboo}.

\pparagraph{Two \Users Agreement} Two \users $\uid_i$ and $\uid_j$ sharing a \pc,    
locally maintain the state of the \pc defined as a scalar $\pcstate := \capa(\pair{\uid_i}{\uid_{j}})$ that  
denotes the current \capacity of their \pc. We require a two party agreement 
protocol that ensures that both \users agree on the current value of $\capa(\pair{\uid_i} {\uid_{j}})$ 
at each point in time. We describe the details of such protocol in~\cref{sec:agreement-two-users}. 
For readability, in the rest we implicitly assume that two \users sharing a \pc satisfactorily agree 
on its current state.  


\subsection{Multi-Hop HTLC}
\label{sec:multihop-htlc}

\REMOVEPfor{SGAM}{170519}{In this section we describe our proposal for payments based on Hash
Time-Lock Contracts (HTLCs). This routine is at the core of our
construction for a \pcn and we describe it as a separate building block
since we believe that such a technique might be of independent
interest. \MESSAGEPfor{PSGAM}{170519}{We have said this couple of times already. 
We can remove previous paragraph if space required}}

We consider the standard scenario of an indirect payment
from a sender $\sender$ to a receiver $\receiver$ for a certain value
$\val$ through a path of users $(\uid_1, \dots, \uid_n)$, where
$\uid_n =\receiver$. All users belonging to the same network share the
description of a hash function
$H: \{0,1\}^* \to \{0,1\}^\lambda$ that we model as a random oracle. 

Let $\lang$ be the following language: $\lang = \{(H, y', y, x) \mid \exists (w) \text{ s.t. } y' \allowbreak = H(w) \land
y  = H(w \oplus x)\}$
where $w \oplus x$ denotes the bitwise XOR of the two
bitstrings. Before the payment starts, the sender $\sender$ locally
executes the following $\Setup_\text{HTLC}$ algorithm described in~\cref{fig:setup-htlc}.

\begin{figure}[b]
{\fbox{
    \procedure[codesize=\small]{$\Setup_\text{HTLC}(n):$}{
      \forall i \in [n]:\\
      \quad x_i \in \{0,1\}^{\lambda};   y_i \gets H\left(\bigoplus^{n}_{j=i}x_j\right)\\
      \forall i \in [n-1]:\\
      \quad \pi_i \gets \prover\left((H, y_{i+1}, y_i, x_i),
        \left(\bigoplus^{n}_{j=i+1}x_j\right)\right)\\
      \pcreturn ((x_1, y_1, \pi_1), \dots, (x_n,y_n))
    } } }
\caption{Setup operation for the \newhtlc contract. 
\label{fig:setup-htlc}}
\end{figure}
Intuitively, the sender samples
$n$-many random strings $x_i$ and defines $y_i$ as
$H\left(\bigoplus^{n}_{j=i}x_j\right)$ which is the XOR combination of
all $x_j$ such that $j \ge i$. Then, $\sender$ computes the proofs
$\pi$ to guarantee that each $y_i$ is well-formed, without revealing
all of the $x_i$. The receiver is provided with $(x_n, y_n)$ and she 
simply checks that $y_n = H(x_n)$. $\sender$ then sends
$(x_i, y_i, \pi_i)$ to each intermediate user $\uid_i$, through a
direct communication channel. Each $\uid_i$ runs
$\verifier((H, y_{i+1}, y_i, x_i),\pi_i)$ and aborts the payment if
the verification algorithm rejects the proof.

Starting from the user $\uid_0 = \sender$, each pair of users
$( \uid_i, \uid_{i+1})$ check whether both users received the same
values of $(y_{i+1}, \val)$. This can be done by simply exchanging and
comparing the two values. If this is the case, they establish 
HTLC ($\uid_i$, $\uid_{i+1}$, $y_{i+1}$, $\val$, $t_i$) 
 as described in~\cref{sec:current-pcns-background},  
where $t_i$ defines some timespan such that for all
$i \in [n]: t_{i-1} = t_i + \Delta$,\NEWG{ for some positive value $\Delta$}.
Once the contract between
$(\uid_{n-1}, \uid_n)$ is settled, the user $\uid_n$ (the receiver)
can then pull $\val$ bitcoins by releasing the $x_n$, which by
definition satisfies the constraint $H(x_n) = y_n$. Once the value of
$x_n$ is published, $\uid_{n-1}$ can also release a valid condition
for the contract between $(\uid_{n-2}, \uid_{n-1})$ by simply
outputting $x_{n-1} \oplus x_n$. In fact, this mechanism propagates
for every intermediate user of the payment path, until $\sender$: For
each node $\uid_i$ it holds that, whenever the condition for the
contract between $(\uid_i,\uid_{i+1})$ is released, i.e., somebody
publishes a string $r$ such that $H(r) = y_{i+1}$, then $\uid_i$
immediately learns $x_i \oplus r$ such that $H(x_i \oplus r) = y_i$,
which is a valid condition for the contract between
$(\uid_{i-1},\uid_{i})$. It follows that each \NEWPfor{SGAM}{170823}{intermediate} user whose
\emph{outgoing} contract has been pulled is able to release a valid
condition for the \emph{incoming} contract\REMOVEPfor{SGAM}{170823}{, except for $\sender$}.

\REMOVEGfor{SAM}{170823}{We present a variation of this protocol based on discrete logarithm-hard groups in Appendix~\ref{sec:dltc} and we discuss the privacy implications of confidential transactions in \pcns in Appendix~\ref{sec:ct}.}



\subsection{Construction Details}
\label{sec:protocol-details}

In the following, we describe the details of the three
operations (\openc, \closec,  \pay) that compose \sysname.

\begin{asparaitem}
\item $\openc(\uid_1, \uid_2, \beta, \timeout, \fee)$: 
The purpose of this operation is to open a payment channel between 
\users $\uid_1$ and $\uid_2$. 
For that, they create an initial Bitcoin deposit  \REMOVEGfor{SAM}{170512}{that includes the verification keys $\vk_1, \vk_2$ } \MESSAGEGfor{SAM}{170512}{We never mention digital signatures, either we introduce them explicitly or we don't mention them at all}
that includes the following information: their Bitcoin \accs, the initial \capacity of the channel ($\beta$), the channel  timeout (\timeout), 
the fee charged to use the channel (\fee) and a channel identifier 
($\pair{\uid_1}{\uid_2}$) agreed beforehand between both \users.  
After the Bitcoin deposit has been 
successfully added to the blockchain, the operation returns $1$. If any of the previous steps is not carried out 
as defined, the operation returns $0$.

\item $\closec(\pair{\uid_1}{\uid_2}, \REPLACEG{\cstate}{\val})$:  
This operation is used by two \users ($\uid_1, \uid_2$) sharing an open payment channel ($\pair{\uid_1}{\uid_2}$)
to close it at the state defined by \REPLACEG{\cstate}{$\val$} and accordingly update 
their bitcoin balances in the Bitcoin blockchain. 
This operation in \sysname 
is performed as defined in the original proposal of payment channels (see~\cref{sec:pay-channels}), 
additionally returning $1$ if and only if the corresponding Bitcoin transaction is added to the Bitcoin blockchain.

\item $\pay((\pair{\uid_0}{\uid_1}, \ldots, \pair{\uid_n}{\uid_{n+1}}), \val)$: 
  A payment operation transfers a value $\val$ from a
  sender  ($\uid_0$) to a receiver  ($\uid_{n+1}$) through a path of open payment channels between
  them (\pair{\uid_0}{\uid_1}, \ldots, \pair{\uid_n}{\uid_{n+1}}). Here, we describe a \emph{blocking} version of the \payment operation (see~\cref{sec:approach}). 
  We discuss the non-blocking version of the \payment operation in~\cref{sec:concur-payments}.


  \iftechreport
  \begin{figure}
  \else
   \begin{figure}[tb]
  \fi
  \setlength\columnsep{1pt}

     {
     \begin{mdframed}
      \procedure[codesize=\small]{$\pay_{\uid_0}(m):$}{
         (\hl{\Txid},  \set{\pair{\uid_0}{\uid_1}} \cup \set{\pair{\uid_i} {\uid_{i+1}}}_{i\in[n]}, \val) \gets m\\
      	  \val_{1} := \val + \sum_i^n \text{fee}(\uid_i)\\
	  \pcif  \val_{1} \leq \capa(\pair{\uid_0} {\uid_1}) \pcthen \\
	  \quad \capa(\pair{\uid_0} {\uid_1}) :=  \capa(\pair{\uid_0} {\uid_1}) - \val_{1}\\
	  \quad \timeout_0 := \timeout_{\text{now}} + \Delta \cdot n\\
       	  \quad \forall i \in [n]:\\
           \quad \quad \val_i := \val_{1} - \sum_{j=1}^{i-1} \text{fee}(\uid_j)\\
           \quad \quad \timeout_i := \timeout_{i-1} - \Delta\\
          \quad \quad \set{(x_i, y_i, \pi_i)}_{i\in[n+1]} \gets \Setup_\text{HTLC}(n+1)\\
        \quad \quad \send(\uid_i, ((\hl{\Txid}, x_i, y_i, y_{i+1}, \\ 
                      \quad \quad \quad \pi_i, \pair{\uid_{i-1}} {\uid_i}, \pair{\uid_i} {\uid_{i+1}}, \val_{i+1}, \timeout_i, \timeout_{i+1}), \hold) )\\
        \quad \text{{\bf HTLC}}(\uid_0, \uid_1, y_1, \val_{1}, t_{1})\\
        \quad \send(\uid_{n+1}, (\hl{\Txid,} x_{n+1}, y_{n+1}, \pair{\uid_n} {\uid_{n+1}}, \\
        \quad \quad \quad \val_{n+1}, \timeout_{n+1}))\\
        \pcelse\\
        \quad \textbf{abort}
        }
        \end{mdframed}
    }
 \caption{ The \pay routine in \sysname for the sender. The \colorchange pseudocode shows additional steps required in \sysnamenonblocking. \label{fig:pay-operations} }    
\end{figure}

 \iftechreport
  \begin{figure}
  \else
   \begin{figure}[tb]
  \fi
	  {
	\begin{mdframed}
        \procedure[codesize=\small]{$\pay_{\uid_{n+1}}(m):$}{
        (\hl{\Txid,} x_{n+1}, y_{n+1}, \pair{n}{n+1}, \val_{n+1}, \timeout_{n+1}) \gets m\\
        \pcif H(x_{n+1}) = y_{n+1} \textbf{ and } \timeout_{n+1} > \timeout_{\text{now}} + \Delta \pcthen\\
         \quad \textbf{store } (x_{n+1}, y_{n+1}, \pair{n}{n+1}, t_{n+1} )\\
         \quad \send (\uid_n, ((\hl{\Txid}, x_{n+1}, y_{n+1}, \pair{n}{n+1}), \confirm))\\
        \pcelse\\
        \quad \send (\uid_n, ((\hl{\Txid,} y_{n+1}, \pair{n}{n+1}, \val_{n+1}), \release)) 
      }
      \end{mdframed}  
      }

     {
     \begin{mdframed}
      \procedure[codesize=\small]{$\pay_{\uid_i}(m):$}{
      (m^*, \ops) \gets m\\
      \pcif \ops = \hold \pcthen\\
      \quad (\hl{\Txid}, x_i, y_i, y_{i+1}, \pi_i, \pair{i-1} {i}, \pair{i}{i+1}, \\ 
      \quad \quad \quad \val_{i+1}, \timeout_i, \timeout_{i+1}) \gets m^* \\
      \quad \pcif  \val_{i+1}  \leq \capa(\pair{\uid_i} {\uid_{i+1}}) \textbf{ and } \verifier((H, y_{i+1}, y_i, x_i),\pi_i) \\ 
      \quad \quad \textbf{and } \timeout_{i+1} =  \timeout_i - \Delta \pcthen \\
         \quad \quad \capa(\pair{\uid_{i}} {\uid_{i+1}})  := \capa(\pair{\uid_i} {\uid_{i+1}})  - \val_{i+1}\\
         \quad \quad \text{{\bf HTLC}}(\uid_i, \uid_{i+1}, y_{i+1}, \val_{i+1}, \timeout_{i+1})\\
         \quad \quad \hl{\cur(\pair{\uid_i} {\uid_{i+1}}).\add(m^*)}\\
        \quad \hl{\pcelse \pcif \exists k~|~\Txid > \cur(\pair{\uid_i} {\uid_{i+1}})[k].\Txid \pcthen }\\ 
        \quad \quad \hl{\q(\pair{\uid_i} {\uid_{i+1}}).\add(m^*)}\\
        \quad \pcelse\\
        \quad \quad \send (\uid_{i-1}, ((\hl{\Txid,} y_{i}, \pair{i-1} {i}, \val_i), \release)) \\
        \pcelse \pcif \ops = \release \pcthen\\
        \quad  (\hl{\Txid,} y_{i+1}, \pair{i} {i+1}, \val_{i+1}) \gets m^*\\
        \quad \capa(\pair{\uid_i}{\uid_{i+1}}) :=  \capa(\pair{\uid_i} {\uid_{i+1}}) + \val_{i+1}\\
        \quad \send(\uid_{i-1}, ((\hl{\Txid,} y_{i}, \pair{i-1} {i}, \val_{i}), \release))\\
        \quad \hl{\cur(\pair{\uid_i} {\uid_{i+1}}).\delete(m^*.\Txid) }\\
        \quad \hl{m' := \max(\q(\pair{\uid_i}{\uid_{i+1}}))} \\
        \quad \hl{\pay_{\uid_i}((m' ,\hold))}\\
        \pcelse \pcif \ops = \confirm \pcthen\\
        \quad (\hl{\Txid,} x_{i+1}, y_{i+1}, \pair{i}{i+1}, \val_{i+1}) \gets m^*\\\
        \quad \textbf{store } (x_{i+1} \oplus x_i, y_i, \pair{i-1}{i}, \timeout_i)\\
         \quad \send(\uid_{i-1}, ((\hl{\Txid,} x_{i+1} \oplus x_i, y_{i}, \pair{i-1}{i}, \val_{i}), \confirm))\\
        \quad \hl{\cur(\pair{\uid_i} {\uid_{i+1}}).\delete(m^*.\Txid) }
      }
    \end{mdframed}
    }
    
  \caption{The \pay routine in \sysname for the receiver 
  and each intermediate \user. The \colorchange pseudocode shows additional steps in \sysnamenonblocking. $\max(\q)$ returns the information for the \payment with highest identifier 
  among those in \q. \label{fig:pay-operations-two} }    
  \end{figure}

  \REPLACEPfor{SGAM}{170823}{T}{As shown in~\cref{fig:pay-operations} (black pseudocode), t}he sender first calculates the cost of sending \val bitcoins 
  to \receiver as $\val_1 := \val + \sum_i \textit{fee}(\uid_i)$, and the corresponding cost at each of the 
  intermediate hops in the \payment path. If the sender does not have enough bitcoins, she aborts the 
  payment. Otherwise,  the sender sets up the contract for each intermediate \pc following the mechanism 
  described in~\cref{sec:multihop-htlc} and sends the information to the corresponding \users.

  Every intermediate \user verifies
  that the incoming HTLC has an associated value smaller or equal than
  the \capacity of the \pc with her sucessor in the path. Additionally, every intermediate \user 
  verifies that the zero-knowledge proof associated to the HTLC for incoming and outgoing \pcs 
  correctly verifies and that the timeout for the incoming HTLC is bigger than the timeout for the outgoing HTLC by a difference of $\Delta$. If so, she generates the
  corresponding HTLC for the same associate value (possibly minus the fees) 
  with the successor \user in the path; otherwise, she
  aborts by triggering the \release event to the predecessor \user in the path. 
 These operations have been shown in~\cref{fig:pay-operations-two} (black pseudocode).
 
 If every \user in the path accepts the \payment, it
  eventually reaches the receiver who in turn releases the information 
  required to fulfill the HTLC contracts in the path\NEWPfor{SGAM}{170823}{ (see~\cref{fig:pay-operations-two} (black pseudocode))}. Interestingly, if any intermediate \user
  aborts the \payment, the receiver does not release the condition as she does not
  receive any \payment\REPLACEPfor{SGAM}{170823}{ and }{. Moreover, }\pcs already set in the previous hops
  of the path are voided after the  timeout set in the corresponding HTLC. 


\end{asparaitem}

\subsection{Security and Privacy Analysis}
\label{sec:fulgor-sec-priv-analysis}
In the following, we state the security and privacy results for \sysname. We prove our results in the 
$(\ttp_\mathsf{anon}, \ttp_\blockchain)$-hybrid model. In other words, Theorem~\ref{thm:uc-security} holds for any UC-secure realization 
of $\ttp_\mathsf{anon}$ and $\ttp_\blockchain$. We show the proof of Theorem~\ref{thm:uc-security} in \cref{sec:uc-security-proof}.

\begin{theorem}[UC-Security]
\label{thm:uc-security}
Let $H: \bit{*} \to \bit{\lambda}$ be a hash function modelled as a random oracle, and let $(\prover, \verifier)$ a zero-knowledge proof system, then \sysname UC-realizes the ideal functionality $\ttp$ defined in~\cref{fig:blocking-tx} in the $(\ttp_\mathsf{anon}, \ttp_\blockchain)$-hybrid model.
\end{theorem}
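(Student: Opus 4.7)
The plan is to construct a PPT simulator $\simulate$ such that for every PPT adversary $\adv$ and every environment $\environment$, the real-world execution of \sysname with $\adv$ is computationally indistinguishable from the ideal-world execution of $\ttp$ with $\simulate$. The simulator internally runs $\adv$, translating between the ideal functionality and the corrupted parties controlled by $\adv$. The proof proceeds via a standard sequence of hybrids bridging the real and ideal executions.

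First, I would specify $\simulate$ per operation. For \openchannel/\closechannel, the simulator simply mirrors the blockchain-level messages through $\ttp_\blockchain$, since the on-chain protocol is essentially the ideal functionality. For \pay initiated by an honest sender through a path containing corrupted hops, $\simulate$ receives from $\ttp$ only the fresh identifiers $\cstate_i$ and the (channel, value minus fees, timeouts) tuple for each corrupted user, delivered via $\ttp_\mathsf{anon}$. To produce the \sysname messages expected by $\adv$, $\simulate$ samples fresh $y_i \gets \bit{\lambda}$, invokes the zero-knowledge simulator to produce $\pi_i$ for the (possibly false) statement, and programs $H$ lazily so that when a corrupted user later releases the expected preimage (which $\simulate$ learns from $\ttp$ upon successful completion), the random oracle is consistent. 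For \pay initiated by a corrupted sender, $\simulate$ applies the NIZK extractor to each $\pi_i$ forwarded to honest users, recovering the XOR-witnesses, which identify the hash chain and hence the intended path and value; it then invokes \pay on $\ttp$ with those parameters, and answers abort/accept decisions on behalf of honest hops according to $\ttp$'s instructions.

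The hybrid argument then proceeds: in $H_1$, replace every real NIZK produced by honest parties with a simulated one; indistinguishability follows from the zero-knowledge property. In $H_2$, replace every honest-sender-produced $y_i = H(\bigoplus_{j \ge i} x_j)$ by a freshly sampled string and lazy-program the random oracle on demand; indistinguishability follows from the random oracle assumption, since the $x_j$'s are high-entropy and the probability that $\adv$ queries the right preimage before it is disclosed is negligible. In $H_3$, re-route honest-to-honest messages through $\ttp_\mathsf{anon}$ and defer all capacity bookkeeping to an idealized list $\mathcal{L}$; this is perfectly indistinguishable because honest hops forward messages verbatim and capacity decisions are deterministic. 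Finally, in $H_4$, on inputs from corrupted senders, extract via the NIZK extractor and forward to $\ttp$; by soundness, an extraction failure on an accepted proof is negligible, and by correctness of \newhtlc (established in \cref{sec:multihop-htlc}), \correctbal in the simulated execution matches the balance updates $\ttp$ performs on $\mathcal{L}$. This last hybrid is exactly the ideal execution.

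The main obstacle will be the interaction between extraction and adaptive corruption. When $\adv$ adaptively corrupts an honest intermediate user $\uid_i$ mid-payment, $\simulate$ must reveal a coherent internal state consistent with all previously simulated messages, including the programmed random oracle points and simulated proofs whose ``witnesses'' it never actually chose. I would address this by exploiting the fact that $\simulate$, upon learning the payment's outcome from $\ttp$ at step~3 of the ideal functionality, already knows the $x_i$ values for completed payments (released during the \confirm phase) and can retroactively program the oracle to match; for in-flight payments, the masking $x_i$ at the now-corrupted hop can be sampled fresh and the adjacent $y_{i\pm 1}$ reprogrammed, since no honest user has yet revealed any conflicting hash query. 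A careful accounting shows that the bad event of an oracle collision prior to reprogramming is negligible in $\lambda$, and combining this with the negligible terms from ZK, extraction soundness, and random oracle unpredictability yields the required indistinguishability $|\mathsf{EXEC}_{\tau,\adv,\environment} - \mathsf{EXEC}_{\ttp,\simulate,\environment}| \le \negl(\lambda)$.
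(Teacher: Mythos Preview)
Your proposal follows a recognizable hybrid-argument template, but it diverges from the paper's proof in two places that matter, and one of them is a genuine gap relative to the theorem's stated assumptions.

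\textbf{Extraction versus soundness.} In your hybrid $H_4$ you ``apply the NIZK extractor'' to proofs produced by a corrupted sender and justify it ``by soundness.'' The theorem, however, only assumes a zero-knowledge proof system with completeness, soundness, and zero-knowledge (see the building-blocks paragraph); knowledge soundness / extraction is never postulated. The paper's proof deliberately avoids extraction: for a corrupted sender it merely \emph{verifies} the proofs $\pi_i$ and then breaks the multi-hop payment into independent single-hop $\pay$ calls to $\ttp$, one per honest user. The crucial step is the abort analysis: if a verifying $\pi_i$ exists over $(H,y_{i+1},y_i,x_i)$, soundness guarantees the \emph{existence} of some $w$ with $H(w)=y_{i+1}$ and $H(w\oplus x_i)=y_i$; then, since $H$ is a random oracle queried only polynomially often, any $x^*$ the adversary later releases with $H(x^*)=y_{i+1}$ must equal $w$, hence automatically satisfies $H(x^*\oplus x_i)=y_i$. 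This soundness-plus-RO-uniqueness argument replaces extraction entirely. Your $H_4$ as written does not go through under the stated hypotheses.

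\textbf{Single $\pay$ invocation versus per-hop invocations.} Related to the above, you propose that after extraction the simulator ``invokes \pay on $\ttp$ with those parameters,'' i.e.\ a single call covering the reconstructed path. But when the sender is corrupted and honest users are interleaved with corrupted ones, the simulator sees only the messages delivered to honest hops and cannot in general reconstruct the full path (nor is it clear extraction of the XOR-witnesses would yield it, since the path is carried in the channel identifiers, not in the witnesses). The paper sidesteps this by issuing a separate $(\pay,\val_i,(\pair{\uid_{i-1}}{\uid_i},\pair{\uid_i}{\uid_{i+1}}),t_{i-1},t_i)$ for each honest $\uid_i$, which requires no global view of the path.

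The remaining pieces of your sketch---simulating proofs via the ZK simulator, arguing that the $y_i$ values are distributed as fresh randomness (the paper phrases this as the statistical closeness of $((\bigoplus_{j\ge i} x_j, y_i))_i$ to independent uniform pairs), and bounding the probability that $\adv$ guesses a preimage of a fresh $H(x')$---are in the same spirit as the paper's direct indistinguishability-plus-abort-bound argument, though the paper does not organize them as numbered hybrids. Your discussion of adaptive corruption and oracle reprogramming is more detailed than what the paper provides; the paper's simulator samples actual preimages $x,x'$ up front (so $y$-values are genuine $H$-outputs) rather than sampling $y$ and programming later, which largely obviates the reprogramming bookkeeping you describe.
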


\subsection{System Discussion}
\label{sec:discussion}

\pparagraph{Compatibility with Bitcoin} We note that all of the non-standard
cryptographic operations (such as $\nizk$ proofs) happen off-chain,
while the only algorithm required to be executed in the verification
of the blockchain is the hash function $H$, which can be instantiated
with SHA-256. Therefore our Multi-Hop HTLC scheme and \sysname as a whole 
is fully compatible
with the current Bitcoin script. \REPLACEPfor{SGAM}{170616}{ 
\TODOA{Mention that current Bitcoin need SegWit or similar solution.}}
{Moreover, as mentioned in~\cref{sec:pay-channels}, the addition of 
SegWit or similar solution for the malleability issue in Bitcoin fully enables  \pcs 
in the Bitcoin system~\cite{segwit-pcn-bitcoin-magazine}. }

\pparagraph{Generality} 
\sysname is general to \pcns (and not only tied to Bitcoin). 
\sysname requires that: (i) \openc allows to embed custom data (e.g., fee); 
(ii) conditional updates of the balance in the \pc. As arbitrary data 
can be included in cryptocurrency transactions~\cite{bitcoin-opreturn} and 
most \pcns support, among others, the HTLC contract, \sysname can 
be used in many other \pcns such as Raiden, a \pcn for Ethereum~\cite{raiden-nw}.
\REPLACEPfor{SGAM}{170616}{
\TODOA{Mention that the DLog contract explained appendix can be better for Ethereum etc}}
{
\REMOVEGfor{SAM}{170823}{Interestingly,  in \pcns with a richer scripting language (e.g., Raiden), more efficient 
contracts are possible such as the Discrete-Log Time Lock Contract (DTLC) that 
we describe in ~\cref{sec:dltc}.}
}

\REMOVEPfor{SGAM}{170519}{\pparagraph{Decentralization} 
\sysname operations require only the intervention of the \users owning the \pcs involved in 
each of the operations. 
In particular, \openc and \closec are performed by the two \users sharing the \pc, while 
the \pay operation is jointly carried out by the \users in the \payment path 
from the sender to the receiver. Moreover, the current state of each \pc is maintained 
by the two \users connected by it. Therefore, \sysname does not require any 
 third party provider and is fully decentralized. 
}

\NEWPfor{SGAM}{170519}{
\pparagraph{Support for Bidirectional Channels} \sysname can be easily extended 
to support bidirectional \pcs and only two minor changes are required. First, the 
\payment information must include the direction requested at each \pc. 
Second, the capacity of a channel $\pair{\uid_L}{\uid_R}$ is a tuple of 
values $(L, R, T)$ where $L$ denotes the current balance for $\uid_L$, $R$ is 
the current balance of $\uid_R$ and $T$ is the total capacity of the channel.  A 
\payment from left to right for value $\val$ is possible if $L \ge \val$ and $R + \val \le T$. 
In such case, the tuple is updated to $(L-\val, R+\val, T)$. A \payment from 
right to left is handled correspondingly. 

}

\REMOVEPfor{SGAM}{170519}{\pparagraph{Blocking \Payments}  Concurrent \payments over disjoint paths 
in the \pcn are trivially handled by \sysname. However, the current definition of 
the \pay operation in \sysname does not avoid that concurrent \payments requiring 
more \capacity than available at \emph{shared links} end up in a deadlock 
situation where none of the \payments goes through (see \cref{fig:blocking-tx}). 
In~\cref{sec:concur-payments}, we describe how to extend \sysname to ensure 
that at least one \payment goes through, thereby avoiding deadlocks.
\MESSAGEPfor{PSGAM}{170519}{We explained exactly this in the overview. We can omit paragraph above  
if space needed}}

\section{Non-blocking Payments  in \pcns}
\label{sec:concur-payments}

In this section, we discuss how to handle concurrent \payments in a non-blocking manner. In other words, 
how to guarantee that at least one \payment out of a set of concurrent \payments terminates.  

In the following, we start with an impossibility result that dictates the design  of \sysnamenonblocking, our protocol for 
non-blocking \payments. Then, we describe the modifications required in the ideal world functionality and 
\sysname to achieve them. Finally, we discuss the implications of these modifications in terms of privacy properties. 

\subsection{Concurrency vs Privacy}


We show that achieving non-blocking progress requires  a global state 
associated to each of the \payments. Specifically, we show that we cannot provide
\emph{disjoint-access parallelism} and non-blocking progress for \pcns.
Formally, a \pcn implementation is \emph{disjoint-access parallel} if for any two payments channels $e_i,e_j$,
\pcstate(e$_i$) $\cap$ \pcstate(e$_j$) =$\emptyset$.
\begin{lemma}
\label{lemma:concurtid}
\label{lemma:transaction-id}
There does not exist any strictly serializable disjoint-access parallel implementation for the payment channels problem that provides non-blocking progress.
\end{lemma}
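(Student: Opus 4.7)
My approach is a standard covering/indistinguishability argument of the flavor used in impossibility results for disjoint-access parallel transactional memory (e.g., Guerraoui–Kapa\l ka), adapted to the \pcn setting. Assume for contradiction a protocol $\Pi$ that is strictly serializable, disjoint-access parallel (DAP), and guarantees non-blocking progress. The plan is to construct two concurrent payments whose paths overlap at exactly one shared channel, and then exhibit two executions that differ only outside the view of some intermediate user, yet require incompatible behaviors from that user.

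First, I would set up the adversarial topology. Consider two payments $\pay_1 = \pay((\pair{s_1}{a}, \pair{a}{b}, \pair{b}{r_1}), \val)$ and $\pay_2 = \pay((\pair{s_2}{c}, \pair{c}{b}, \pair{b}{r_2}), \val)$ that share only the user $b$ and the edge $\pair{b}{r_1}$ (or by symmetry, only the link into $b$). Set capacities so that at most one of the two payments can succeed — the shared bottleneck channel only has enough balance to forward $\val$ once. By non-blocking progress, in any concurrent execution at least one of $\pay_1,\pay_2$ must commit; by strict serializability, the surviving one must be consistent with a sequential order.

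Second, I would invoke DAP to derive an indistinguishability obligation. Since $\pcstate(\pair{s_1}{a})$ and $\pcstate(\pair{s_2}{c})$ are disjoint, users $a$ and $c$ access only state local to their own incoming and outgoing channels. Construct two executions $E_1$ and $E_2$ that agree on every message exchanged on $\pair{s_1}{a}$ and on $\pair{s_2}{c}$ up to the moment those users must irrevocably commit or abort their outgoing locks, but differ only in the messages exchanged at the shared channel $\pair{b}{r_1}$ (by scheduling the arrivals at $b$ in opposite orders). In $E_1$, any strictly serializable schedule forces $\pay_1$ to survive; in $E_2$, it forces $\pay_2$. Because of DAP, $a$'s view in $E_1$ is identical to $a$'s view in $E_2$ (and symmetrically for $c$), so $a$ must take the same action in both — either always commit the outgoing lock for $\pay_1$ or always abort it, and likewise for $c$.

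Third, I would derive the contradiction by case analysis on these two deterministic choices. If both $a$ and $c$ commit their outgoing locks, then in some scheduling the shared bottleneck forces both payments to reach $b$ with valid locks, and by \correctbal neither $a$ nor $c$ may be left paying without being reimbursed, so $b$ cannot drop either one without violating balance; hence both must commit, contradicting the capacity constraint. If both $a$ and $c$ refuse, non-blocking progress is violated in the execution where only one payment is in fact present at the shared link. The remaining asymmetric cases (one commits, one refuses) can be mapped, via the indistinguishable execution, to the scenario where the committing user's payment is the one that must be aborted by serializability — again a contradiction. The only way out is for $a$ or $c$ to consult state shared with the other payment (e.g., a global transaction identifier or a globally visible queue), which is precisely what DAP forbids.

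The main obstacle I anticipate is pinning down the indistinguishability step cleanly: I need to argue that the bounded-synchronous scheduler can actually interleave the two executions so that $a$ and $c$ reach their commit points before any information about the shared channel can propagate back, which requires choosing the path lengths and message delivery rounds carefully relative to $\Delta$. Once that scheduling is fixed, the remaining case analysis is a mechanical check against the three properties stated in \cref{sec:attacker-model}.
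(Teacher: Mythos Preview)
Your approach has a genuine gap, and it is precisely at the step you flagged as the ``main obstacle.''  First, a minor point: as written, your two paths $s_1\to a\to b\to r_1$ and $s_2\to c\to b\to r_2$ do not share any edge; presumably you intend $r_1=r_2$ so that $\pair{b}{r_1}$ is the unique shared bottleneck.  Even granting that, the case analysis does not go through.  In a \pcn, ``committing an outgoing lock'' means setting up an HTLC, which is \emph{not} irrevocable: it is conditional and can be voided by timeout.  So in your case where both $a$ and $c$ set their outgoing locks, the shared user $b$ can simply choose one of the two incoming locks to forward on $\pair{b}{r}$ and ignore the other; the ignored lock eventually times out, the corresponding intermediate user (say $a$) is refunded, and \correctbal is preserved.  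Nothing here contradicts DAP, because $b$ is deciding based on the state of its \emph{own} incident channels, which DAP does not forbid.  Hence a single shared bottleneck does not force any user into an indistinguishability trap: the arbitration happens locally at $b$, after $a$ and $c$ have already set their (revocable) locks, and the outcome propagates back through the normal accept/abort flow.

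The paper's proof avoids this by using a \emph{ring} $u_1\to u_2\to u_3\to u_4\to u_5\to u_1$ and two payments whose paths overlap on \emph{multiple} edges in opposite phases.  In the partial execution $E$, $\pay_1$ has locked $(u_1{,}u_2),(u_2{,}u_3),(u_3{,}u_4)$ while $\pay_2$ has locked $(u_4{,}u_5),(u_5{,}u_1)$; now each payment needs an edge the other already holds, so there is no single user who can arbitrate locally.  The contradiction with strict serializability then comes from the all-or-nothing requirement: in any extension where one payment is declared successful, some channel that belongs exclusively to the \emph{other} payment's path (e.g., $(u_5,u_1)$ for $\pay_2$) must decide to roll back, and by DAP its channel state is disjoint from the channels where the conflict is resolved --- so it cannot know which extension it is in without shared global state.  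The cyclic dependency is what your single-bottleneck topology lacks; to repair your argument you would need at least two contested edges arranged so that neither payment can be locally preferred, which is exactly the ring construction.
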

We defer to \cref{sec:concurrency-proofs-lemmas} for \REPLACEAfor{SGM}{170613}{the proof}{a proof sketch}.    
Having established this inherent cost to concurrency and privacy, we model global state 
 by a \Txid field attached to each of the \payments. \REPLACEPfor{SGAM}{170823}{This}{We remark that 
 this \Txid}, however, 
allows an adversary to reduce the set of possible senders and receivers for the \payment, therefore inevitably reducing the 
privacy guarantees, as we discuss in~\cref{sec:ideal-world-nonblocking}.
\subsection{Ideal World Functionality}
\label{sec:ideal-world-nonblocking}
Here, we show how to modify the ideal functionality $\ttp$, as described in \cref{sec:ideal-world}, to account for the changes to achieve non-blocking progress in any \pcn. First,  a single identifier $\Txid$ (as opposed to independently sampled $h_i$) is used for all the \pcs in the path 
$(\pair{\uid_0}{\uid_1}, \dots, \pair{\uid_n}{\uid_{n+1}})$. Second, $\ttp$ no longer aborts a \payment simply when no capacity is left in a \pc. 
Instead, $\ttp$ queues the \payment if its \Txid is higher than the current in-flight \payment, or aborts it the $\Txid$ is lower. We detail the modified 
ideal functionality in~\cref{sec:ideal-world-non-blocking}.

\pparagraph{Discussion}
\TODOS{We need the serializability in the ideal world and an argument why the ideal world 
achieves it}
Here, we discuss how the modified ideal world definition captures the security and privacy notions of interest 
as described in~\cref{sec:attacker-model}. In particular, it is easy to see that 
the notions of \emph{\correctbal} and \emph{value privacy} are enforced along the same lines. 
However, the leakage of the same 
\payment identifier among all intermediate \users in the payment path, reduces the possible set of sender and receivers to 
the actual sender and receiver  
for such \payment, thereby breaking relationship anonymity.
Therefore, there is an inherent tradeoff between how to handle concurrent \payments (blocking or non-blocking) and 
the anonymity guarantees.

\begin{figure}[tb]
\includegraphics[width=\columnwidth]{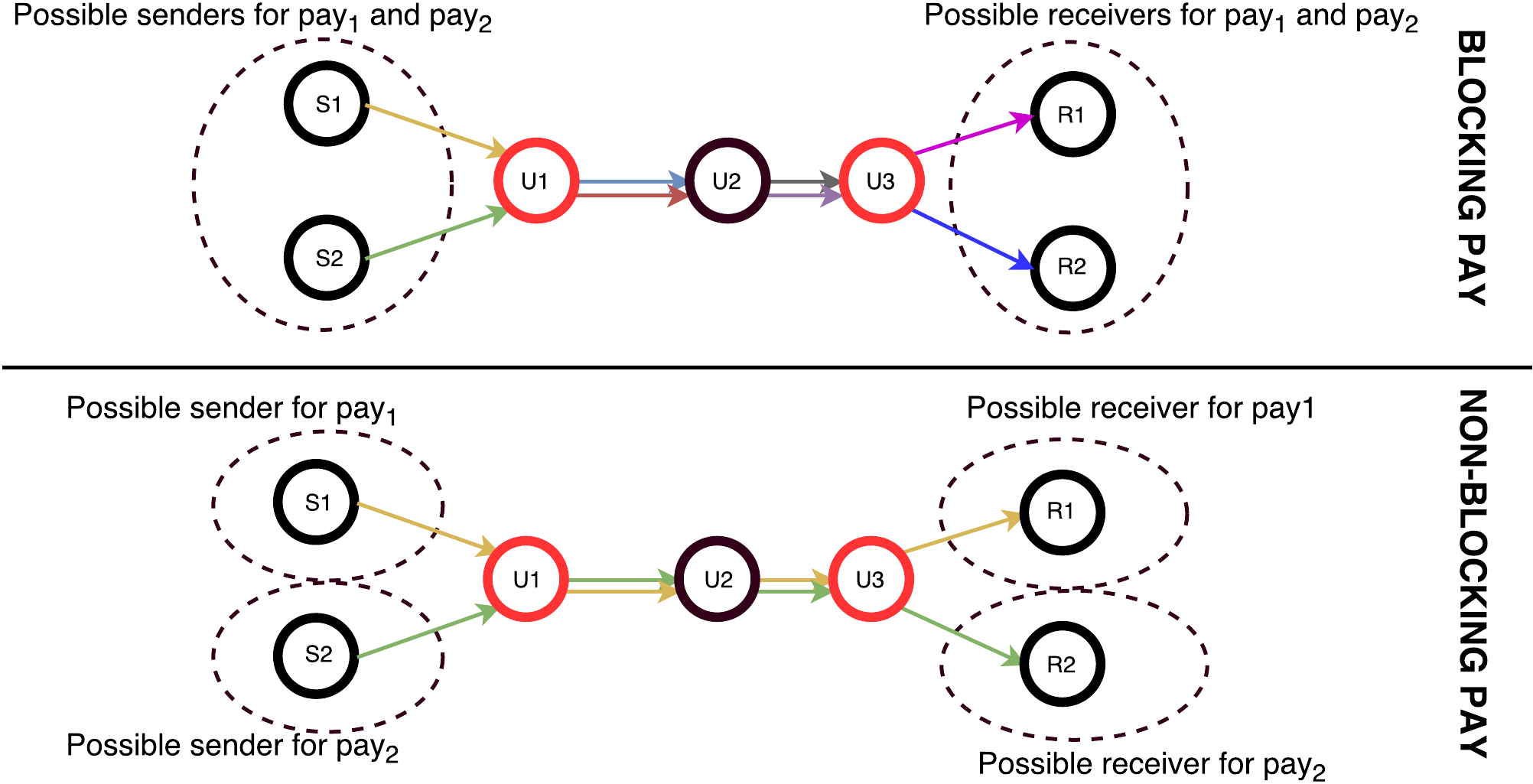}
\caption{Illustrative example of tradeoff between concurrency and privacy. Each node represents a user: black nodes are honest and 
red are byzantine. In both cases, we assume two concurrent \payments: $S_1$ pays $R_1$ and $S_2$ pays $R_2$ through the path $U_1, U_2, U_3$. 
The color of the arrow denotes the payment identifier. Dashed ellipses denote the anonymity set for each case.
\label{fig:tradeoff-privacy}}
\end{figure}

An illustrative example of this tradeoff is shown in~\cref{fig:tradeoff-privacy}. It shows how 
two simultaneous payments $\pay_1((\pair{S_1}{U_1}, \pair{U_1}{U_2},$ $\pair{U_2}{U_3}, \pair{U_3}{R_1}), \val)$ and $\pay_2((\pair{S_2}{U_1}, \pair{U_1}{U_2},$ $\pair{U_2,}{U_3}, \pair{U_3}{R_2}),$ $\val)$ 
are handled depending on whether concurrent payments are handled in a blocking or non-blocking fashion. 
We assume that both payments can successfully finish in the current \pcn and that both payments transfer 
the same payment amount \val, as otherwise relationship anonymity is trivially broken.  

For the case of blocking payments, each intermediate \user $\uid_j$ 
observes an independently chosen identifier $\Txid_{ij}$ for each \payment $\pay_i$. Therefore, 
the attacker is not able to correlate the pair $(\Txid_{11}, \Txid_{21})$ (i.e., view of $U_1$) with the pair 
$(\Txid_{13}, \Txid_{23})$ (i.e., view of $U_3$). It follows that for a \pay operation issued by any node, say $S_1$, the set of possible receivers that the adversary observes is $\set{R_1, R_2}$. 

However, when the concurrent payments are handled in a non-blocking manner, the adversary  
observes for $\pay_1$ that $\Txid_{11} = \Txid_{13}$. Therefore, the adversary can trivially derive that 
the only possible receiver for a \pay initiated by $S_1$ is $R_1$.

\TODOM{we should add a sentence saying who is supposed to choose between blocking and non-blocking(the sender?) and that this flexibility is a feature of the system}

\subsection{\sysnamenonblocking: Our Construction}

\REMOVEPfor{SGAM}{170519}{In this section, we describe how to modify \sysname to handle concurrent \payments in a non-blocking manner.}

\pparagraph{Building Blocks}
We require the same building blocks as described in \cref{sec:building-blocks} and \cref{sec:multihop-htlc}. 
The only difference is that the channel's state between two \users is now defined 
as  $\pcstate := (cur_{(\uid_i, \uid_j)}[\,],$ $\q_{(\uid_i, \uid_j)}[\,], \valtext_{(\uid_i, \uid_j)})$, where $\cur$ denotes an array of \payments 
 currently using (part of) the \capacity available  at the \pc;  
$\q$ denotes the array of \payments waiting for enough \capacity at the \pc, and \valtext denotes the current \capacity value of the \pc.

\pparagraph{Operations} 
The \openc and \closec operations remain as described in \cref{sec:protocol-details}. 
However, the \pay operation has to be augmented to ensure non-blocking \payments. We have described 
the additional actions in \colorchange pseudocode in \cref{fig:pay-operations,fig:pay-operations-two}. 

In the following, we informally describe these additional actions required for the \pay operation.
In a nutshell, when a \payment reaches an intermediate \user in the path, 
several events can be triggered. 
The simplest case is when the corresponding \pc is not saturated yet (i.e., enough \capacity is left for 
the \payment to succeed). The user accepts the \payment and 
simply stores its information in \cur as an in-flight \payment.  

The somewhat more interesting case occurs when the \pc is saturated. 
This means that (possibly several) \payments have been already gone through 
the \pc. In this case, the simplest solution is to abort the new \payment, but this leads to deadlock 
situations. Instead, we ensure that deadlocks do not occur by leveraging the 
total order of \payment identifiers: 
If the new \payment identifier ($\Txid$) is higher than any of the  \payment identifiers currently 
active in the \pc (i.e., included in \cur[\,]), \NEWPfor{SGAM}{170823}{the payment identified by} \Txid is stored in $\q$. In this manner, if any of the currently 
active \payments are aborted, 
a queued \payment ($\Txid^*$) can 
be recovered from $\q$ and reissued towards the receiver.  On the other hand, if  
 \Txid is lower than every identifier for currently active \payments, 
 \NEWPfor{SGAM}{170823}{the payment identified by} \Txid is directly 
aborted as it would \REPLACEPfor{SGAM}{170823}{never}{not} get to complete in the presence of a concurrent 
\payment with higher identifier in the \pcn.


\subsection{Analysis and System Discussion}

\pparagraph{Security and Privacy Analysis}
In the following, we state the security and privacy results for \sysnamenonblocking when handling 
\payments in a non-blocking manner. We prove our results in the 
$(\ttp_\mathsf{anon}, \ttp_\blockchain)$-hybrid model. In other words, Theorem~\ref{thm:uc-security-non-blocking} holds for any UC-secure realization 
of $\ttp_\mathsf{anon}$ and $\ttp_\blockchain$ (\REPLACEAfor{SGM}{170613}{proof}{analysis} in  \cref{sec:uc-security-proof}).

\begin{theorem}[UC-Security]
\label{thm:uc-security-non-blocking}
Let $H: \bit{*} \to \bit{\lambda}$ be a hash function modelled as a random oracle, and let $(\prover, \verifier)$ a zero-knowledge proof system, then 
\sysnamenonblocking UC-realizes the ideal functionality $\ttp$ described in~\cref{fig:ideal-world-non-blocking} in the $(\ttp_\mathsf{anon}, \ttp_\blockchain)$-hybrid model.
\end{theorem}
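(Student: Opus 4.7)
The plan is to adapt the simulator constructed for \cref{thm:uc-security} to the non-blocking setting, exploiting the fact that the only structural changes are (i) a single global identifier $\Txid$ shared across one payment path, and (ii) the queue/abort discipline when a channel is saturated. Concretely, I would build a simulator $\simulate$ that, upon receiving from the modified $\ttp$ a tuple of the form $(\cstate, \pair{\uid_{i-1}}{\uid_i}, \pair{\uid_i}{\uid_{i+1}}, \val - \sum_{j \ge i}\fee_j, \timeout_{i-1}, \timeout_i)$ for each corrupted intermediate hop, uses the leaked $\cstate$ as the payment-wide $\Txid$ when emulating the real-protocol messages exchanged through $\ttp_\mathsf{anon}$. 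For the hash conditions $y_i$, $\simulate$ samples fresh $y_i$ uniformly and programs the random oracle lazily once the receiver reveals $x_{n+1}$ and each corrupted hop reveals its $x_i$; for the NIZK $\pi_i$ accompanying each leg, $\simulate$ invokes the zero-knowledge simulator of $(\prover,\verifier)$ on statement $(H, y_{i+1}, y_i, x_i)$.

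Next, I would specify how $\simulate$ mirrors the queue logic. Whenever $\ttp$ reports in step~1 that a hop would fail the capacity check, $\simulate$ inspects its local copy of $(\cur, \q, \valtext)$ for that \pc: if the leaked $\Txid$ exceeds some entry in $\cur$, $\simulate$ simulates queueing (i.e., sends no \release message upstream but records the payment in its simulated $\q$); otherwise $\simulate$ emits an \release message upstream on behalf of the corrupted hop, consistently with the abort branch of $\pay_{\uid_i}$. When $\ttp$ later triggers \confirm or \release on a previously queued payment --- which in the ideal world corresponds to a subsequent \pay call on the same path --- $\simulate$ pops the corresponding entry from $\q$ and replays the \hold messages as the honest protocol would. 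This gives a faithful in-band simulation of the max-selection rule $\max(\q)$.

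The indistinguishability argument then proceeds via a standard hybrid chain. Hybrid $H_0$ is the real execution of \sysnamenonblocking; $H_1$ replaces real NIZKs with simulated ones (indistinguishable by zero-knowledge); $H_2$ replaces the honest-party computations of $y_i$ from XOR combinations by uniformly sampled values with oracle programming (indistinguishable unless the adversary queries $H$ on some $\bigoplus_{j \ge i} x_j$ before seeing $y_i$, which happens with negligible probability by the random oracle abstraction and soundness of the NIZK guaranteeing well-formedness); $H_3$ routes all channel updates through the ideal functionality, using \correctbal and serializability arguments inherited from \cref{thm:uc-security} for single-hop atomicity, together with \cref{lemma:byzantine} to rule out the pathological byzantine-channel case. $H_3$ is the ideal execution with $\simulate$.

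The main obstacle I expect is step $H_2$ combined with the queue logic: because the same $\Txid$ now propagates along the entire path, the environment can correlate events at different hops and potentially force the simulator into an inconsistent state if a queued payment is later completed out of the order $\simulate$ anticipated. I would resolve this by having $\simulate$ keep a per-channel scheduling table keyed on $\Txid$ that mirrors exactly the $\cur/\q$ bookkeeping of the protocol, and by proving a scheduling invariant: for every honest channel, the sequence of \hold/\release/\confirm events produced by $\simulate$ is a permutation of those in $\mathcal{L}$ that respects the $\Txid$ ordering enforced by $\pay_{\uid_i}$. Given this invariant, indistinguishability follows exactly as in Theorem~\ref{thm:uc-security}, with the caveat --- already reflected in the weaker relationship-anonymity guarantee of the modified $\ttp$ --- that the leakage of $\Txid$ to corrupted on-path hops is legitimately mirrored in both worlds.
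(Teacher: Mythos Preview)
Your proposal is essentially the same approach as the paper's: adapt the simulator from \cref{thm:uc-security}, propagate a single consistent $\Txid$ to all corrupted on-path users (instead of independent $h_i$), and have the simulator internally mirror the $\cur/\q$ queueing discipline. The paper's own treatment is in fact far terser than yours---it is a one-paragraph delta stating exactly those two modifications and declaring ``the rest of the argument is unchanged''---so your explicit hybrid chain and scheduling invariant are additional scaffolding rather than a different route. One minor point: your appeal to \cref{lemma:byzantine} in hybrid $H_3$ is extraneous, as that lemma concerns the impossibility of serializability under a fully-byzantine channel and plays no role in the UC indistinguishability argument itself.
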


\pparagraph{System Discussion}
\NEWPfor{SGAM}{170519}{
\sysnamenonblocking is compatible with Bitcoin, can be generally applicable to \pcn and supports 
bidirectional \pcs similar to \sysname. 
}
Moreover, the \sysnamenonblocking protocol  provides non-blocking progress.
Specifically, \sysnamenonblocking ensures that some \payment  successfully terminates in every execution.
Intuitively, this is because any two conflicting \payments can necessarily be ordered by their respective unique identifier:
the highest \payment identifier is deterministically identified and terminates successfully while the lower priority \payment \emph{aborts}.
%
%
\renewcommand{\arraystretch}{1.2}
\begin{table}[b]
\caption{Comparison between \sysname and \sysnamenonblocking. \label{table:tradeoff-systems}}
\small
\begin{tabular}{c | c | c}
 & {\bf \sysname }& {\bf \sysnamenonblocking }\\
 \hline
 {\bf \Correctbal} & \CIRCLE & \CIRCLE\\
 {\bf Serializability} & \CIRCLE & \CIRCLE\\
 {\bf Non-blocking progress} & \Circle & \CIRCLE\\
 {\bf Value Privacy } & \CIRCLE & \CIRCLE\\
 {\bf Anonymity} & \CIRCLE & \LEFTcircle
\end{tabular}

\end{table}

\subsection{\sysname vs \sysnamenonblocking}
In this work, we characterize the tradeoff between the two protocols presented in this work. 
As shown in~\cref{table:tradeoff-systems}, both protocols guarantee crucial security 
and correctness properties such as \correctbal 
and serializability. By design, \sysnamenonblocking is the only protocol that 
ensures non-blocking progress. Finally, regarding privacy, we aimed at 
achieving the strongest privacy possible. However, although both protocols 
guarantee value privacy, we have shown that it is impossible to 
simultaneously achieve non-blocking progress and  strong anonymity. Therefore, 
\sysname achieves strong anonymity while \sysnamenonblocking achieves non-blocking progress \NEWPfor{SGAM}{170616}{at 
the cost of weakening the anonymity guarantees. We note nevertheless that \sysnamenonblocking provides relationship anonymity 
only if none of the intermediate nodes is compromised. Intuitively, \sysnamenonblocking provides 
this (weaker) privacy guarantee because it still uses \newhtlc as \sysname. 
}



%

\section{Performance Analysis}
\label{sec:implementation}

In this section, we first evaluate the performance of \sysname. 
Finally, 
we describe the overhead required for \sysnamenonblocking.

We have developed a proof-of-concept implementation in Python to evaluate the 
performance of \sysname. 
We interact with the API of \emph{lnd}~\cite{ln-release}, the recently released Lightning Network implementation, 
We  use 
\emph{listchannels} to extract the current \capacity of an open payment channel,  
\emph{listpeers} to extract the list of public keys from other \users in the network, and 
\emph{getinfo} to extract the \user's own public key.  
We have instantiated the hash function with SHA-256. We have implemented the \newhtlc 
using a python-based implementation 
of ZK-Boo~\cite{pyzkboo} to create the zero-knowledge proofs. We set ZK-Boo to use SHA-256, 136 rounds 
to achieve a soundness error of the proofs of $2^{-80}$, and a witness of $32$ bytes  as in~\cite{htlc-snarks}.

\pparagraph{Implementation-level Optimizations}
During the protocol description, we have assumed that the sender creates a different 
anonymous communication channel with each intermediate \user. In our implementation, 
however, we use Sphinx~\cite{sphinx} 
to create a single anonymous communication channel between sender and receiver, 
where intermediate nodes are the intermediate \users in the path. Sphinx allows to  
send the required \payment information to each intermediate \user while obfuscating 
the information intended for other \users in the path and the actual length of the path by 
padding the forwarded data. This optimization has 
been discussed in the bitcoin community and implemented in the current release of 
\emph{lnd}~\cite{bolt4}. 


\pparagraph{Testbed}
We have simulated five \users and created a linear structure of \pcs: \user $i$ has \pcs 
open only with \user $i-1$ and \user $i+1$, \user $0$ is the sender, and \user $4$ is 
the receiver of the \pay operation. 
We run each of the \users  in a separated virtual machine with 
an  Intel Core i7 3.1 GHz processor and 2 GB RAM. The machines are connected 
in a local network with a mean latency of 111.5 milliseconds. 
For our experiments, we assume that 
each \user has already opened the corresponding \pcs and got the public verification key of each other \user 
in the \pcn. As this is a one time setup operation, we do not account for it in our experiments.

\pparagraph{Performance} 
%
%
 We have first executed the payment operation available in the 
\emph{lnd} software, 
which uses the HTLC-based \payment as the contract for conditional updates in a \pc. 
We observe that a (non-private) \pay operation over a path with 5 \users takes $609$ ms 
and so needs \sysname. 
Additionally, the \sender must  run the $\Setup_\text{HTLC}(n+1)$ protocol, increasing thereby her 
computation time. Moreover, the \sender must send the additional information corresponding to 
the \newhtlc contract (i.e., $(x_i, y_i, y_{i+1},  \pi_i)$) 
to each intermediate \user, which adds communication complexity. 

The sender requires $309$ ms to compute 
the proof $\pi_i$ for each of the 
intermediate \users. Each proof is of size $1.65$ MB. Finally, each  intermediate \user  
requires $130$ ms to verify  $\pi_i$. 
We focus on the zero-knowledge proofs as they are the most expensive operation.

Therefore, the total computation overhead is $1.32$ seconds (\emph{lnd} \pay and \newhtlc) and the total communication overhead is 
less than $5$ MB (3 zero-knowledge proofs plus the tuple of small-size values $(x_i, y_i, y_{i+1})$ per intermediate \user). 
We observe that previous proposal~\cite{htlc-snarks} required around $10$ seconds to compute only a single 
zero-knowledge proof. In contrast, the \pay operation in \sysname requires less than $2$ seconds of computation and to communicate less 
than $5$ MB among the \users in the path for the complete \payment operation, 
which demonstrates the practicality of \sysname.

\pparagraph{Scalability} In order to test the scalability of the \pay operation in \sysname, we have studied the 
running time and communication overhead required by each of the roles in a \payment (i.e., sender, 
receiver, and intermediate \user). Here, we take into account that Sphinx requires to pad the forwarded messages to the maximum 
path length. In the absence of widespread \pcn in practice, we set the maximum path length to $10$ in our test, as suggested for 
similar payment networks such as the Ripple credit network~\cite{silentwhispers}.

Regarding the computation time, the sender requires $3.09$ seconds to create $\pi_i$ for each intermediate \user. 
However, this computation time can be improved if different $\pi_i$ are calculated in parallel taking advantage of 
current multi-core systems. Each intermediate \user requires $130$ ms as only has to check the contract for  
\pcs with successor and predecessor \user in the path. Finally, the receiver incurs in few ms as she only has to 
check whether a given value is the correct pre-image of a given hash value.

Regarding communication overhead, the sender must create a message with $10$ proofs of knowledge and other 
few bytes associated to the contract for each intermediate \pc. So in total, the sender must forward $17$MB approximately. 
As Sphinx requires padded messages at each node to ensure anonymity, every intermediate \user must forward 
a message of the same size. 

In summary, these results show that even with an unoptimized implementation, 
a \payment with $10$ intermediate \users takes less than $5$ seconds 
and require a communication overhead of approximately $17$MB at each intermediate \user. Therefore, \sysname induces a relatively small overhead 
while enabling \payments between any two \users 
in the \pcn  
and  has the potential to be deployed as a  \pcn with 
a growing base of \users performing \payments with even $10$ intermediate \users  in a 
matter of few seconds, a result in line with other  privacy
preserving payment systems~\cite{privpay-ndss,
  silentwhispers}.

\pparagraph{Non-blocking \payments (\sysnamenonblocking)}
Given their similar definitions, the performance evaluation for \sysname carries over to \sysnamenonblocking. Additionally, 
the management of non-blocking \payments requires that intermediate \users maintain a list (\cur)  of current in-flight \payments 
and a queue (\q) of \payments waiting to be forwarded when capacity is available. The management of these data structures 
requires a fairly small computation overhead. Moreover, the number of messages to be stored in these data structures 
according to  the specification of \sysnamenonblocking 
is clearly linear in the length of the path. Specifically, a \payment involving a path of length $k\in \mathbb{N}$ incurs O(c$\cdot$k) message complexity, 
where $c$ is
 bounded by the total of concurrent conflicting \payments.


%
%
%

\section{Related Work}
\label{sec:related-work}

Payment channels were first introduced by the Bitcoin community~\cite{bitcoin-contract} and since then, several  extensions have been proposed. 
Decker and Wattenhofer~\cite{dmc} describe bidirectional \pcs~\cite{dmc}. Lind et al.~\cite{teechan} leverage 
trusted platform modules  to 
use a \pc without hindering compatibility with Bitcoin. However, these works focus on a single \pc and 
their extension to support \pcns remain an open challenge.

TumbleBit~\cite{tumblebit} and Bolt~\cite{bolt} propose \offchain path-based \payments  
while achieving sender/receiver anonymity in Tumblebit and payment 
anonymity in  Bolt. However, these approaches are restricted to 
single hop \payments, and it is not clear how to extend them  to account for generic multi-hop \pcns and provide the privacy notions of 
interest, as achieved by  \sysname and \sysnamenonblocking. 

\REMOVEPfor{SGAM}{170519}{
Flare~\cite{flare} is a novel routing mechanism to find paths between
any two \users in a \pcn. Flare uses an onion routing-like approach to
provide privacy guarantees in the \test operation. However, their
approach to calculate the capacity of a path 
leaks the intermediate values of the users in the path, thus
leaking sensitive information. }

The Lightning Network~\cite{ln} has emerged as the most prominent proposal for a \pcn in Bitcoin. Other \pcns such as 
Thunder~\cite{npc-impl4} and Eclair~\cite{npc-impl3} for Bitcoin and Raiden~\cite{raiden-nw} for Ethereum are being proposed 
as slight modifications of the Lightning Network. 
Nevertheless, their use of HTLC leaks a common identifier per \payment, thereby
 reducing the anonymity guarantees as we described in this 
work.  Moreover, current proposals lack a non-blocking solution for concurrent \payments. \sysname and \sysnamenonblocking, 
instead, rely on \newhtlc to overcome the linkability issue with HTLC. They provide a tradeoff between 
non-blocking progress and anonymity.

 Recent works~\cite{privpay-ndss,
  silentwhispers} propose privacy definitions for credit
networks, a payment system that supports multi-hop payments similar to \pcns. 
Moreover, privacy preserving protocols are described for both centralized~\cite{privpay-ndss}
and decentralized credit networks~\cite{silentwhispers}.
However, credit networks differ from \pcns in that they do not require
to ensure accountability against an underlying blockchain. 
This requirement 
reduces the  set of cryptographic operations available to design 
a \pcn. Nevertheless, 
\sysname and \sysnamenonblocking provide similar privacy guarantees as 
credit networks even  
under those restrictions.

\REMOVEPfor{SGAM}{170519}{ Moreover,
the centralized architecture of PrivPay conflicts with the decentralized
nature of a \pcn. SilentWhispers, despite being a decentralized solution, relies on
a pre-defined set of well connected and reputed nodes (called
landmarks) in the network that act as rendezvous point for each
transaction. However, these are not necessarily available in \pcns
hindering the portability of similar cryptographic techniques to \sysname.  }

\NEWPfor{SGAM}{170519}{
Miller et al~\cite{sprites} propose a construction for \pcs to reduce the time 
that funds are locked at intermediate \pcs (i.e., collateral cost), an interesting problem but 
orthogonal to our work. Moreover, they formalize their construction for multi-hop 
\payments as an ideal functionality. However, they focus on collateral cost and 
do not discuss privacy guarantees, concurrent \payments are handled in a blocking manner only, 
and their construction relies on smart contracts available on Ethereum that are 
incompatible with the current Bitcoin scripting system.
}

\NEWPfor{SGAM}{170519}{
Towns proposed~\cite{htlc-snarks} a variation of the HTLC contract, 
based on zk-SNARKs, to avoid its linkability problem among \pcs in a path. However, 
the Bitcoin community has not adopted this approach due to its inefficiency.  
In this work, we revisit this solution with a formal protocol with provable security and 
give an efficient instantiation based on ZK-Boo~\cite{zkboo}.}
\TODOP{Add sth similar for dlog contract if added in main body finally}
\TODOM{this sentence is dangerous, since it impacts the novelty of our work: we should also say that the Bitcoin community didn't manage to come up with an efficient solution (they used SNARKS),  which is on of our contributions. And this should also be mentioned in the intro, no?}

\section{Conclusion}
\label{sec:conclusions}

Permisionless blockchains governed on global consensus protocols face, among others, 
scalability issues in catering a growing base of \users and \payments. A burgeoning approach 
to overcome this challenge consists of \pcns and recent efforts have derived in the 
first yet alpha implementations such as the Lightning Network~\cite{ln} in Bitcoin or Raiden~\cite{raiden-nw} in Ethereum. 
We are, however, only scratching the surface as many challenges such as liquidity, network formation, routing 
scalability, concurrency or privacy are yet to be thoroughly studied. 

In this work, we lay the foundations for privacy and concurrency in \pcns. In particular, we formally 
define in the Universal Composability framework two modes of operation for \pcns attending to 
how concurrent \payments are handled (blocking versus non-blocking).
We provide formally proven 
instantiations (\sysname and \sysnamenonblocking) for each, offering a tradeoff between 
non-blocking progress and anonymity.
Our evaluation results demonstrate that is feasible to deploy 
\sysname and \sysnamenonblocking in practice and can scale to cater a growing number of \users.

\REMOVEPfor{SGAM}{170613}{Although in this work we focus on cryptocurrency-backed \pcn, we believe that our findings in this work 
are of interest in other decentralized payment networks such as credit networks~\cite{silentwhispers} or 
the InterLedger Protocol~\cite{ilp}, designed to allow multi-hop \payments across different blockchains. }

\pparagraph{Acknowledgments} We thank the anonymous reviewers for their helpful reviews, and Ivan Pryvalov for providing 
his python-based implementation of ZK-Boo.

This work is partially supported by a Intel/CERIAS research assistantship,
and by the National Science Foundation under grant CNS-1719196. This research is based upon work supported by the German research foundation (DFG) through the collaborative research center 1223 and by the state of Bavaria at the Nuremberg Campus of Technology (NCT). NCT is a research cooperation between the Friedrich-Alexander-University Erlangen-N\"urnberg (FAU) and the Technische Hochschule N\"urnberg Georg Simon Ohm (THN).

\bibliographystyle{abbrv}
\bibliography{bibliography,references}

\def\noopsort#1{} \def\No{\kern-.25em\lower.2ex\hbox{\char'27}}
  \def\no#1{\relax} \def\http#1{{\\{\small\tt
  http://www-litp.ibp.fr:80/{$\sim$}#1}}}
\begin{thebibliography}{10}

\bibitem{ln-release}
Alpha release of the lightning network daemon.
\newblock Blog entry.
\newblock
  \url{http://lightning.community/release/software/lnd/lightning/2017/01/10/lightning-network-daemon-alpha-release/}.

\bibitem{bitcoin-contract}
Bitcoin wiki: Bitcoin contract.
\newblock \url{https://en.bitcoin.it/wiki/Contract}.

\bibitem{bitcoin-scalability-faq}
Bitcoin wiki: Bitcoin scalability faq.
\newblock \url{https://en.bitcoin.it/wiki/Scalability\_FAQ}.

\bibitem{bitcoin-scalability}
Bitcoin wiki: Block size limit controversy.
\newblock \url{https://en.bitcoin.it/wiki/Block_size_limit_controversy}.

\bibitem{bitcoin-opreturn}
Bitcoin wiki: Op\_return.
\newblock \url{https://en.bitcoin.it/wiki/OP\_RETURN}.

\bibitem{paychannels}
Bitcoin wiki: Payment channels.
\newblock \url{https://en.bitcoin.it/wiki/Payment_channels}.

\bibitem{bitcoin-timelock}
Bitcoin wiki: Timelock.
\newblock \url{https://en.bitcoin.it/wiki/Timelock}.

\bibitem{tx-malleability}
Bitcoin wiki: Transaction malleability.
\newblock \url{https://en.bitcoin.it/wiki/Transaction_Malleability}.

\bibitem{bolt4}
Bolt \#4: Onion routing protocols.
\newblock
  \url{https://github.com/lightningnetwork/lightning-rfc/blob/master/04-onion-routing.md}.

\bibitem{npc-impl3}
Eclair implementation of the lightning network.
\newblock \url{https://github.com/ACINQ/eclair}.

\bibitem{npc-impl1}
Lightning network daemon.
\newblock Github implementation.
\newblock \url{https://github.com/LightningNetwork/lnd}.

\bibitem{npc-impl2}
Lightning protocol reference implementation.
\newblock Github implementation.
\newblock \url{https://github.com/ElementsProject/lightning}.

\bibitem{raiden-nw}
Raiden network.
\newblock Project's website.
\newblock \url{http://raiden.network/}.

\bibitem{ripple-concurrency}
Reliable transaction submission.
\newblock Ripple protocol's documentation.
\newblock \url{https://ripple.com/build/reliable-transaction-submission/}.

\bibitem{ripple}
Ripple protocol.
\newblock Project's website.
\newblock \url{https://ripple.com/}.

\bibitem{segwit-signaling}
Segregated witness adoption.
\newblock Blog entry.
\newblock \url{https://bitcoincore.org/en/segwit_adoption/}.

\bibitem{stellar}
Stellar protocol.
\newblock Project's website.
\newblock \url{https://www.stellar.org/}.

\bibitem{visa-peak}
Stress test prepares visanet for the most wonderful time of the year.
\newblock Blog entry.
\newblock
  \url{http://www.visa.com/blogarchives/us/2013/10/10/stress-test-prepares-visanet-for-the-most-wonderful-time-of-the-year/index.html}.

\bibitem{npc-impl4}
Thunder network.
\newblock Project's website.
\newblock \url{https://github.com/blockchain/thunder}.

\bibitem{AS85}
{\sc Alpern, B., and Schneider, F.~B.}
\newblock Defining liveness.
\newblock {\em Inf. Process. Lett. 21}, 4 (Oct. 1985), 181--185.

\bibitem{Androulaki2013}
{\sc Androulaki, E., Karame, G.~O., Roeschlin, M., Scherer, T., and Capkun, S.}
\newblock Evaluating user privacy in bitcoin.
\newblock Financial Cryptography and Data Security 2013.

\bibitem{kristov-privacy-ln}
{\sc Atlas, K.}
\newblock The inevitability of privacy in lightning networks.
\newblock Blog entry.
\newblock
  \url{https://www.kristovatlas.com/the-inevitability-of-privacy-in-lightning-networks/}.

\bibitem{AW04}
{\sc Attiya, H., and Welch, J.}
\newblock {\em Distributed Computing. Fundamentals, Simulations, and Advanced
  Topics.}
\newblock John Wiley \& Sons, 2004.

\bibitem{Anoa-CSF}
{\sc Backes, M., Kate, A., Manoharan, P., Meiser, S., and Mohammadi, E.}
\newblock Anoa: A framework for analyzing anonymous communication protocols.
\newblock In {\em IEEE 26th Computer Security Foundations Symposium\/} (2013).

\bibitem{bitter-to-better}
{\sc Barber, S., Boyen, X., Shi, E., and Uzun, E.}
\newblock Bitter to better. how to make {Bitcoin} a better currency.
\newblock Financial Cryptography and Data Security 2012.

\bibitem{CL05}
{\sc Camenisch, J., and Lysyanskaya, A.}
\newblock A formal treatment of onion routing.
\newblock In {\em Advances in Cryptology---{CRYPTO}\/} (2005).

\bibitem{CanettiUC}
{\sc Canetti, R.}
\newblock Universally composable security: A new paradigm for cryptographic
  protocols.
\newblock In {\em "FOCS'01"}.

\bibitem{ACS86}
{\sc Cristian, F., Aghili, H., and Strong, H.~R.}
\newblock Approximate clock synchronization despite omission and performance
  faults and processor joins.
\newblock In {\em Proceedings of the 16th International Symposium on
  Fault-Tolerant Computing\/} (July 1986).

\bibitem{liquidity-credit-networks}
{\sc Dandekar, P., Goel, A., Govindan, R., and Post, I.}
\newblock Liquidity in credit networks: a little trust goes a long way.
\newblock In {\em ACM Conference on Electronic Commerce\/} (2011).

\bibitem{formation-credit-networks}
{\sc Dandekar, P., Goel, A., Wellman, M.~P., and Wiedenbeck, B.}
\newblock Strategic formation of credit networks.
\newblock In {\em WWW\/} (2012).

\bibitem{sphinx}
{\sc Danezis, G., and Goldberg, I.}
\newblock Sphinx: {A} compact and provably secure mix format.
\newblock In {\em 30th {IEEE} Symposium on Security and Privacy (S{\&}P 2009)}.

\bibitem{dmc}
{\sc Decker, C., and Wattenhofer, R.}
\newblock A fast and scalable payment network with bitcoin duplex micropayment
  channels.
\newblock In {\em Stabilization, Safety, and Security of Distributed Systems\/}
  (2015).

\bibitem{channel-monitoring}
{\sc Dryja, T.}
\newblock Unlinkable outsourced channel monitoring.
\newblock (Talk transcript)
  \url{https://diyhpl.us/wiki/transcripts/scalingbitcoin/milan/unlinkable-outsourced-channel-monitoring/}.

\bibitem{FLP85}
{\sc Fischer, M.~J., Lynch, N.~A., and Paterson, M.~S.}
\newblock Impossibility of distributed consensus with one faulty process.
\newblock {\em J.~ACM 32}, 2 (Apr. 1985), 374--382.

\bibitem{bip68}
{\sc Friedenbach, M., BtcDrak, Dorier, N., and kinoshitajona}.
\newblock Bip 68: Relative lock-time using consensus-enforced sequence numbers.
\newblock \url{https://github.com/bitcoin/bips/blob/master/bip-0068.mediawiki}.

\bibitem{ryan-network}
{\sc Fugger, R.}
\newblock Money as ious in social trust networks \& a proposal for a
  decentralized currency network protocol.
\newblock Technical Report, 2004.
\newblock \url{http://archive.ripple-project.org/decentralizedcurrency.pdf}.

\bibitem{trust-networks}
{\sc Ghosh, A., Mahdian, M., Reeves, D.~M., Pennock, D.~M., and Fugger, R.}
\newblock Mechanism design on trust networks.
\newblock In {\em WINE'07}.

\bibitem{zkboo}
{\sc Giacomelli, I., Madsen, J., and Orlandi, C.}
\newblock Zkboo: Faster zero-knowledge for boolean circuits.
\newblock In {\em {USENIX} Security\/} (2016).

\bibitem{improve-ln}
{\sc go1111111 (pseudonym)}.
\newblock Idea to improve lightning network.
\newblock Forum post.
\newblock \url{https://bitcointalk.org/index.php?topic=1134319.0}.

\bibitem{bolt}
{\sc Green, M., and Miers, I.}
\newblock Bolt: Anonymous payment channels for decentralized currencies.
\newblock In {\em CCS\/} (2017).

\bibitem{tumblebit}
{\sc Heilman, E., Alshenibr, L., Baldimtsi, F., Scafuro, A., and Goldberg, S.}
\newblock {TumbleBit}: An untrusted bitcoin-compatible anonymous payment hub.
\newblock In {\em NDSS\/} (2017).

\bibitem{HS11-progress}
{\sc Herlihy, M., and Shavit, N.}
\newblock On the nature of progress.
\newblock In {\em OPODIS\/} (2011), pp.~313--328.

\bibitem{Herrera-privacy}
{\sc Herrera-Joancomart{\'i}, J., and P{\'e}rez-Sol{\`a}, C.}
\newblock Privacy in bitcoin transactions: New challenges from blockchain
  scalability solutions.
\newblock In {\em Modeling Decisions for Artificial Intelligence\/} (2016).

\bibitem{hawk}
{\sc Kosba, A., Miller, A., Shi, E., Wen, Z., and Papamanthou, C.}
\newblock Hawk: The blockchain model of cryptography and privacy-preserving
  smart contracts.
\newblock In {\em IEEE S\&P\/} (2016).

\bibitem{Koshy2014}
{\sc Koshy, P., Koshy, D., and McDaniel, P.}
\newblock An analysis of anonymity in bitcoin using p2p network traffic.
\newblock In {\em Financial Cryptography and Data Security\/} (2014).

\bibitem{LF82}
{\sc Lamport, L., and Fischer, M.}
\newblock Byzantine generals and transaction commit protocols.
\newblock Tech. Rep.~62, SRI International, Apr. 1982.

\bibitem{teechan}
{\sc Lind, J., Eyal, I., Pietzuch, P.~R., and Sirer, E.~G.}
\newblock Teechan: Payment channels using trusted execution environments.
\newblock \url{http://arxiv.org/abs/1612.07766}.

\bibitem{ln-scalability-challenges}
{\sc Lopp, J.}
\newblock Lightning's balancing act: Challenges face bitcoin's scalability
  savior.
\newblock Blog entry.
\newblock
  \url{http://www.coindesk.com/lightning-technical-challenges-bitcoin-scalability/}.

\bibitem{silentwhispers}
{\sc Malavolta, G., Moreno-Sanchez, P., Kate, A., and Maffei, M.}
\newblock {SilentWhispers}: Enforcing security and privacy in credit networks.
\newblock In {\em NDSS\/} (2017).

\bibitem{towards-pn}
{\sc McCorry, P., M{\"{o}}ser, M., Shahandashti, S.~F., and Hao, F.}
\newblock Towards bitcoin payment networks.
\newblock In {\em Australasian Conference Information Security and Privacy\/}
  (2016).

\bibitem{coinjoin-analysis}
{\sc Meiklejohn, S., and Orlandi, C.}
\newblock Privacy-enhancing overlays in bitcoin.
\newblock In {\em BITCOIN\/} (2015).

\bibitem{imc-meiklejohn}
{\sc Meiklejohn, S., Pomarole, M., Jordan, G., Levchenko, K., McCoy, D.,
  Voelker, G.~M., and Savage, S.}
\newblock A fistful of bitcoins: Characterizing payments among men with no
  names.
\newblock In {\em IMC\/} (2013).

\bibitem{sprites}
{\sc Miller, A., Bentov, I., Kumaresan, R., and McCorry, P.}
\newblock Sprites: Payment channels that go faster than lightning.
\newblock {\em CoRR abs/1702.05812\/} (2017).

\bibitem{privpay-ndss}
{\sc Moreno-Sanchez, P., Kate, A., Maffei, M., and Pecina, K.}
\newblock Privacy preserving payments in credit networks.
\newblock In {\em NDSS\/} (2015).

\bibitem{Moreno-SanchezM17}
{\sc Moreno{-}Sanchez, P., Modi, N., Songhela, R., Kate, A., and Fahmy, S.}
\newblock Mind your credit: Assessing the health of the ripple credit network.
\newblock {\em CoRR abs/1706.02358\/} (2017).

\bibitem{ripple-pets}
{\sc Moreno-Sanchez, P., Zafar, M.~B., and Kate, A.}
\newblock Listening to whispers of ripple: Linking wallets and deanonymizing
  transactions in the ripple network.
\newblock In {\em PETS\/} (2016).

\bibitem{bitcoin}
{\sc Nakamoto, S.}
\newblock Bitcoin: A peer-to-peer electronic cash system.
\newblock \url{https://bitcoin.org/bitcoin.pdf}, 2008.

\bibitem{Pap79-serial}
{\sc Papadimitriou, C.~H.}
\newblock The serializability of concurrent database updates.
\newblock {\em J. ACM 26\/} (1979), 631--653.

\bibitem{Pfitzmann08anonymity}
{\sc Pfitzmann, A., and Hansen, M.}
\newblock Anonymity, unlinkability, undetectability, unobservability,
  pseudonymity, and identity management -- a consolidated proposal for
  terminology, 2008.

\bibitem{ln}
{\sc Poon, J., and Dryja, T.}
\newblock The bitcoin lightning network: Scalable off-chain instant payments.
\newblock Technical Report.
\newblock \url{https://lightning.network/lightning-network-paper.pdf}.

\bibitem{bazaar}
{\sc Post, A., Shah, V., and Mislove, A.}
\newblock Bazaar: Strengthening user reputations in online marketplaces.
\newblock In {\em NSDI\/} (2011).

\bibitem{flare}
{\sc Prihodko, P., Zhigulin, S., Sahno, M., and Ostrovskiy, A.}
\newblock Flare: An approach to routing in lightning network.
\newblock
  \url{http://bitfury.com/content/5-white-papers-research/whitepaper_flare_an_approach_to_routing_in_lightning_network_7_7_2016.pdf}.

\bibitem{pyzkboo}
{\sc Pryvalov, I.}
\newblock {pyZKBoo++} implementation.
\newblock Project's website.
\newblock \url{https://sites.google.com/view/pyzkboopp/home}.

\bibitem{Reid2013}
{\sc Reid, F., and Harrigan, M.}
\newblock An analysis of anonymity in the bitcoin system.
\newblock In {\em Security and Privacy in Social Networks\/} (2013).

\bibitem{rusty-channel}
{\sc Russell, R.}
\newblock Reaching the ground with lightning.
\newblock Technical Report.
\newblock \url{http://ozlabs.org/~rusty/ln-deploy-draft-01.pdf}.

\bibitem{bitiodine}
{\sc Spagnuolo, M., Maggi, F., and Zanero, S.}
\newblock {BitIodine}: Extracting intelligence from the bitcoin network.
\newblock In {\em Financial Cryptography and Data Security\/} (2014).

\bibitem{scalability-article}
{\sc Torpey, K.}
\newblock Brock pierce: Bitcoin's scalability issues are a sign of its success.
\newblock Blog entry.
\newblock
  \url{https://bitcoinmagazine.com/articles/brock-pierce-bitcoin-s-scalability-issues-are-a-sign-of-its-success-1459867433/}.

\bibitem{ln-censor-resistance}
{\sc Torpey, K.}
\newblock Does the lightning network threaten bitcoin?s censorship resistance?
\newblock Blog entry.
\newblock
  \url{https://bitcoinmagazine.com/articles/does-the-lightning-network-threaten-bitcoin-s-censorship-resistance-1461953131/}.

\bibitem{htlc-snarks}
{\sc Towns, A.}
\newblock {Better privacy with SNARKs}.
\newblock Mailing List.
\newblock
  \url{https://lists.linuxfoundation.org/pipermail/lightning-dev/2015-November/000309.html}.

\bibitem{segwit-pcn-bitcoin-magazine}
{\sc van Wirdum, A.}
\newblock Segwit or not, bitfury is getting ready for lightning with successful
  bitcoin main net test.
\newblock Blog entry.
\newblock
  \url{https://bitcoinmagazine.com/articles/segwit-or-not-bitfury-ready-lightning-successful-bitcoin-main-net-test/}.

\bibitem{canal}
{\sc Viswanath, B., Mondal, M., Gummadi, K.~P., Mislove, A., and Post, A.}
\newblock Canal: Scaling social network-based sybil tolerance schemes.
\newblock In {\em EuroSys\/} (2012).

\bibitem{bulletinboard}
{\sc Wikstr{\"o}m, D.}
\newblock A universally composable mix-net.
\newblock In {\em Theory of Cryptography Conference\/} (2004), M.~Naor, Ed.

\end{thebibliography}
\appendix
%
%
%

\section{Security Analysis}
\label{sec:uc-security-proof}

Our proof strategy consists of the description of a simulator $\simulate$ that handles users corrupted by the attacker and simulates the real world execution protocol while interacting with the ideal functionality $\ttp$. The simulator $\simulate$ spawns honest users at adversarial will and impersonates them until the environment $\environment$ makes a corruption query on one of the users: At this point $\simulate$ hands over to $\adv$ the internal state of the target user and routes all of the subsequent communications to $\adv$, who can reply arbitrarily. For operations exclusively among corrupted users, the environment does not expect any interaction with the simulator. Similarly, communications exclusively among honest nodes happen through secure channels and therefore the attacker does not gather any additional information other than \NEWPfor{SGAM}{170824}{the fact that} the communication took place. For simplicity, we omit these operations in the description of our simulator. The random oracle $H$ is simulated by $\simulate$ via lazy-sampling. The operations to be simulated for a \pcn are described in the following.

\REMOVEG{Note that there is a slight discrepancy between the notion of time in the ideal wold and in the real world protocol: In the former case the elapsing of time is modeled as the increasing size of the blockchain, while in the latter it is a global discrete counter. \TODOG{@Pedro: I think this is not accurate. In the real protocol it is also counted as the increased size in the blockchain. We just used days in the description for simplicity. How to have a global discrete counter in a decentralized network?}It is easy to see that the simulator can mimic the latter in the ideal world by regularly performing dummy transactions between honest nodes (since the blockchain is modeled as an append-only bulletin board). Throughout the following analysis we omit such a technicality in favor of a more understandable proof.}

\medskip
\noindent $\openc(\pair{\uid_1}{\uid_2}, \beta, t, f)$: Let $\uid_1$ be the user that initiates the request. We analyze two possible cases: 
		\begin{enumerate}
		\item {\em Corrupted $\uid_1$:} $\simulate$ receives a $(\pair{\uid_1}{\uid_2}, \beta, t, f)$ request from the adversary on behalf of $\uid_1$ and initiates a two-user agreement protocol with $\adv$ to convey upon a local fresh channel identifier $\pair{\uid_1}{\uid_2}$. If the protocol successfully terminates, $\simulate$ sends $(\openchannel, \pair{\uid_1}{\uid_2}, \beta, t, f)$ to $\ttp$, which eventually returns $(\pair{\uid_1}{\uid_2}, h)$.
		\item {\em Corrupted $\uid_2$:} $\simulate$ receives a message $(\pair{\uid_1}{\uid_2}, \val, t, f)$ from $\ttp$ engages $\adv$ in a two-user agreement protocol on behalf of $\uid_1$ for the opening of the channel. If the execution is successful, $\simulate$ sends an accepting message to $\ttp$ which returns $(\pair{\uid_1}{\uid_2}, h)$, otherwise it outputs $\bot$.
		\end{enumerate}	
	If the opening was successful the simulator initializes an empty list $\mathcal{L}_{\pair{\uid_1}{\uid_2}}$ and appends the value $(h, \val, \bot, \bot)$.
	
\medskip	
\noindent $\closec(\pair{\uid_1}{\uid_2}, \val)$: Let $\uid_1$ be the user that initiates the request. We distinguish two possible scenarios: 
	\begin{enumerate}
		\item {\em Corrupted $\uid_1$:} $\simulate$ receives a closing request from the adversary on behalf of $\uid_1$, then it fetches $\mathcal{L}_{\pair{\uid_1}{\uid_2}}$ for some value $(h, \val, x, y)$. If such a value does not exist then it aborts. Otherwise it sends $(\closechannel, \pair{\uid_1}{\uid_2}, h)$ to $\ttp$. 
		\item {\em Corrupted $\uid_2$:} $\simulate$ receives $(\pair{\uid_1}{\uid_2}, h, \bot)$ from $\ttp$ and simply notifies $\adv$ of the closing of the channel $\pair{\uid_1}{\uid_2}$. 
	\end{enumerate}
	
\medskip	
\noindent $\pay((\pair{\uid_0}{\uid_1}, \dots, \pair{\uid_n}{\uid_{n+1}}), \val)$: Since the specifications of the protocol differ depending on whether a user is a sender, a receiver or an intermediate node of a payment, we consider the cases separately.
	\begin{enumerate}
		\item {\em Sender:} In order to initiate a payment, the adversary must provide each honest user $\uid_i$ involved with a message $m_i$ that the simulator parses as $(\pair{\uid_{i-1}}{\uid_i},\pair{\uid_{i}}{\uid_{i-1}}, x_i, y_i, y_{i+1}, \allowbreak \pi_i, \val_i, t_i, t_{i+1})$, also the receiver of the payment $\uid_{n+1}$ (in case it is not corrupted) is notified with some message $(\pair{\uid_{n}}{\uid_{n+1}}, \allowbreak x_n, \allowbreak y_n, \val, t_{n})$. For each intermediate honest user $\uid_i$, the simulator checks whether $t_i \ge t_{i+1}$ and $\verifier((H, \allowbreak y_i, y_{i+1}, x_i), \pi_i) = 1$. If the conditions hold, $\simulate$	 sends to $\ttp$ the tuple $(\pay, \val_i, \allowbreak (\pair{\uid_{i-1}}{\uid_i}, \allowbreak \pair{\uid_i}{\uid_{i+1}}), t_{i-1}, t_i)$, whereas for the receiver (in case it is honest) sends $(\pay, \allowbreak \val, \pair{\uid_{n}}{\uid_{n+1}}, t_n)$ if $y_n = H(x_n)$\REMOVEG{. }\NEWG{, otherwise it aborts.} For each intermediate user $\uid_i$ the simulator confirms the payment only when receives from the user $\uid_{i+1}$ an $x$ such that $H(x_i \oplus x) = y_i$. If $\adv$ outputs a value $x^*$ such that $H(x^*) = y_{i+1}$ but $H(x_i \oplus x^*) \ne y_i$ then $\simulate$ aborts the simulation. If the receiver is honest then the simulator confirms the payment if the amount $\val$ corresponds to what agreed with the sender and if $H(x_n) = y_n$. If the payment is confirmed the entry $(h_i, \val^* - \val_i, x_i \oplus x, y_i)$ is added to $\mathcal{L}_{\pair{\uid_{i-1}}{\uid_i}}$, where $(h_i^*, \val^*, \cdot, \cdot)$ is the entry of $\mathcal{L}_{\pair{\uid_{i-1}}{\uid_i}}$ with the lowest $\val^*$, and the same happens for the receiver.
		\item {\em Receiver:} $\simulate$ receives some $(h, \pair{\uid_n}{\uid_{n+1}}, \val, t_n)$ from $\ttp$, then it samples a random $x \in \bit{\lambda}$ and returns to $\adv$ the tuple $(x, H(x), \val)$. If $\adv$ returns a string $x ' = x$, then $\simulate$ returns $\top$ to $\ttp$, otherwise it sends $\bot$.
		\item {\em Intermediate user:} $\simulate$ is notified that a corrupted user is involved in a payment with a message of the form $(h_i, h_{i+1}, \allowbreak \pair{\uid_{i-1}}{\uid_i}, \pair{\uid_{i}}{\uid_i+1}, \val, t_{i-1}, t_i)$ by $\ttp$. $\simulate$ samples an $x \in \bit{\lambda}$ and an $x' \in \bit{\lambda}$ and runs the simulator of the zero-knowledge scheme to obtain the proof $\pi$ over the statement $(H, H(x \oplus x '), H(x'), x)$. The adversary is provided with the tuple $(\pair{\uid_{i-1}}{\uid_i},\pair{\uid_{i}}{\uid_{i-1}}, x,H(x \oplus x'), H(x'), \pi, \val, t_{i-1}, t_i)$ via an anonymous channel. If $\adv$ outputs a string $x'' = x \oplus x '$, then $\simulate$ aborts the simulation. At some point of the execution the simulator is queried again on $(h_i, h_{i+1})$, then it sends $x'$ to $\adv$ on behalf of $\uid_{i+1}$. If $\adv$ outputs a string $z = x \oplus x'$ the simulator sends $\top$ to $\ttp$ and appends $(h_i, \val^*- \val, z, H(z))$ to $\mathcal{L}_{\pair{\uid_{i-1}}{\uid_i}}$, where $(h_i^*, \val^*, \cdot, \cdot)$ is the entry of $\mathcal{L}_{\pair{\uid_{i-1}}{\uid_i}}$ with the lowest $\val^*$. The simulator sends $\bot$ otherwise. Note that we consider the simpler case where a single node in the payment is corrupted. However this can be easily extended to the more generic case by book-keeping the values of $h_i$ and choosing the corresponding the pre-images $x$ and $x'$ consistently. The rest of the simulation is unchanged.
	\end{enumerate}

\pparagraph{Analysis} Since the simulation runs only polynomially-bounded algorithms it is easy to see that the simulation is efficient. We now argue that the view of the environment in the simulation is indistinguishable from the execution of the real-world protocol. For the moment, we assume that the simulation never aborts, then we separately argue that the probability of the simulator to abort is negligible. For the $\openc$ and $\closec$ algorithms the indistinguishability argument is trivial. On the other hand for the payment we need a more sophisticated reasoning. Consider first the scenario where the sender is corrupted: In this case the simulation diverges form the the original protocol since each multi-hop payment is broke down into separate single-hop payments. Note that the off-chain communication mimics exactly the real-world protocol (as long as $\simulate$ does not abort): Each node $\uid_i$ that is not the receiver confirms the transaction to $\ttp$ only if it learns a valid pre-image of its $y_i$. Since we assume that the simulation does not abort, it follows that the simulation is consistent with the fact that each honest node always returns $\top$ at this stage of the execution (i.e., the payment chain does not stop at a honest node, other than the sender). However, the values published in the blockchain could in principle diverge from what the adversary is expecting in the real execution. In fact, an entry of the real blockchain contains the values of $(x, H(x))$ corresponding to a particular payment, in addition to the information that is leaked by the ideal functionality. Therefore we have to show that the values of $(x, y)$ that the simulator appends to $\adv$'s view of $\blockchain$ (in the $\closec$ simulation) have the same distribution as in the real world. Note that those values are either selected by the adversary if the sender is corrupted (there the argument is trivial) or chosen to be $(x, H(x))$ by the simulator, for some randomly chosen $x \in \{0,1\}^{\lambda}$. For the latter case it is enough to observe that the following distributions are statistically close 
\[
\left(\left(\bigoplus^n_{i=1} x_i, y_1\right), \dots, (x_n, y_n)\right) \approx ((r_1, s_1), \dots, (r_n, s_n)),
\]
where for all $i: (x_i, r_i) \gets \{0,1\}^{2\cdot \lambda}$, $y_i \gets H(x_i)$, and $s_i \gets H(r_i)$. Note that on the left hand side of the equation the values are distributed accordingly to the real-world protocol, while on the right hand side the distribution corresponds to the simulated values. The indistinguishability follows. For the simulation of the receiver and of the intermediate users one can use a similar argument. We only need to make sure that $\adv$ cannot interrupt a payment chain before it reaches the receiver, which is not allowed in the ideal world. It is easy to see that in that case ($\adv$ outputs $x''$ such that $H(x'') = H(x \oplus x')$ before receiving $x'$) the simulation aborts.

What is left to be shown is that the simulation aborts with at most negligible probability. Let $\mathsf{abort}_\text{s}$ the event that $\simulate$ aborts in the simulation of the sender and let $\mathsf{abort}_\text{i}$ be the event that $\simulate$ aborts in the simulation of the intermediate user. By the union bound we have that
$\prob{\mathsf{abort}} \le \prob{\mathsf{abort}_\text{s}} + \prob{\mathsf{abort}_\text{i}}$.


We note that in case $\mathsf{abort}_\text{s}$ happens than the adversary was able to output a valid proof $\pi_i$ over $(H, y_i, y_{i+1}, x_i)$ and an $x^*$ such that $H(x^*) = y_{i+1}$ and $H(x^* \oplus x_i) \ne y_i$. Let $w$ be a bitstring such that $H(w) = y_{i+1}$ and $H(w \oplus x_i) = y_i$, by the soundness of the proof $\pi_i$ such a string is guaranteed to exists. It follows that $H(x^* \oplus x_i) \ne H(w \oplus x_i)$ which implies that $w \ne x^*$, since $H$ is a deterministic function. However we have that $H(x^*) = H(w)$, which implies that $w = x^*$, since $\adv$ can query the random oracle at most polynomially-many times. This is a contradiction and therefore it must be the case that for all $\ppt$ adversaries the probability of $\mathsf{abort}_\text{s}$ to happen is $0$. We can now rewrite $\prob{\mathsf{abort}} \le \prob{\mathsf{abort}_\text{i}}$.
Consider the event $\mathsf{abort}_\text{i}$: In this case we have that $\adv$, on input $(H(x \oplus x'), H(x'), x)$,  is able to output some $x'' = x \oplus x'$. Note that $x'$ is a freshly sampled value and therefore the values $H(x \oplus x')$ and $H(x')$ are uniformly distributed over the range of $H$. Thus the probability that $\adv$ is able to output the pre-image of $H(x \oplus x')$ without knowing $x'$ is bounded by a negligible function in the security parameter. It follows that $\prob{\mathsf{abort}} \le \negl(\lambda)$.
And this concludes our proof. \qed

\pparagraph{Non-Blocking Solution} The security proof for our non-blocking solution is identical to what described above, with the only exception that the ideal functionality leaks the identifier of a payment to the intermediate users. Therefore the simulator must make sure to choose the transaction identifier consistently for all \NEWG{of} the corrupted users involved in the same payment. \NEWG{In addition to that, the simulator must also implement the non-blocking logic for the queueing of the payments.} The rest of the argument is unchanged.

%
%
%
%

\section{Agreement between Two \Users}
\label{sec:agreement-two-users}

\REPLACEPfor{SGAM}{170519}{A fundamental building block towards realizing a \pcn that handles concurrent \payments in a non-blocking 
manner involves reaching \emph{agreement}~\cite{FLP85} among the
two \users sharing a \pc on its state at each point in time. Given that, in the following we define the agreement between two \users. Later, we describe the operations 
required by a \pcn. }
{
In this section, we describe the protocol run by two \users, $\uid_0$ and $\uid_1$, sharing a \pc to reach \emph{agreement}~\cite{FLP85} 
on the channel's state at each point in time. 
}

\TODOP{Probably not really important now, but this is confusing the way 
it is explained now}

\pparagraph{Notation and Assumptions}
In this section, we follow the notation we introduced in~\cref{sec:construction}. 
We assume that there is a total order between the events received 
by the \users at a \pc (e.g., 
lexicographically sorted by the hash of the corresponding \payment data) and the \users 
(e.g., lexicographically sorted by their public verification keys). Moreover, 
we assume that \users perform the operations associated to each event 
as defined in our construction (see~\cref{sec:protocol-details}). Therefore, 
in this section we only describe the additional steps required by \users to handle 
concurrent \payments. Finally, we assume that two \users sharing a \pc,    
locally maintain the state of the \pc (\pcstate). The actual definition of \pcstate 
depends on whether concurrent \payments are handled in a blocking or non-blocking 
manner. For blocking, \pcstate is defined as  $\capa(\pair{\uid_0} {\uid_{1}})$, where $\capa$ 
denotes the current capacity in the \pc. For non-blocking, \pcstate is 
defined as a tuple $\{\cur[\,], \q[\,], \valtext\}$, where $\cur$ denotes an array of \payments 
 currently using (part of) the \capacity available  at the \pc;  
$\q$ denotes the array of \payments waiting for enough \capacity at the \pc.

The agreement on \pcstate between the corresponding two \users $\uid_0$ and $\uid_1$ is performed in two communication rounds. In the first  round,  
both \users exchange the set of events $\{\ops_b\}$ to  be applied into the \pcstate. 
At the end of this first round, each \user comes up with the aggregated set of events $\{\ops\} \df \{\ops\}_0 \cup \{\ops\}_1$ deterministically sorted 
according to the following criteria. First, the events proposed by the \user with the highest identifier are included first. 
Second, if several events are included 
in $\{\ops\}_b$, they are sorted  according to the following sequence: $\confirm, \release, \hold$.\footnote{Although other sequences are possible, 
we fix this one to ensure that the sorting is deterministic.} 
Finally, events of the same type 
are sorted in decreasing order by the corresponding \payment identifier. These set of rules ensure that the both 
\users can deterministically compute the same sorted version of the set $\{\ops_i\}$. 

Before starting the second communication round, each \user applies the changes related to each event in $\{\ops_i\}$  to the current \pcstate. 
The mapping between each event and the corresponding actions is defined as  
a function $\set{(\ops_j, m_j)} $ $\gets \agreef(\{\ops_i\})$. This function returns a set of tuples that indicate what events must be forwarded to which \user in the \payment path.
Then, in the second communication round, each event $\ops_j$ is sent to the corresponding \user $\uid_j$ 
(encoded in $m_j$). 
The actual implementation of the function \agreef determines how the concurrent \payments are handled. 
In \sysname, we implement the function \agreef 
as described in~\cref{fig:pay-operations,fig:pay-operations-two} (black pseudocode) for blocking approach and 
as described in~\cref{fig:pay-operations,fig:pay-operations-two} (\colorchange pseudocode) for non-blocking approach. 

In the following, we denote the complete agreement protocol between two 
\users by $\consensustwo(\uid_0, \uid_1,$ $ \set{\ops_i})$.

\begin{lemma}
\label{sec:lemma-2users}
$\consensustwo(\uid_0, \uid_1, \set{\ops_i})$ ensures agreement on the \pcstate given the 
set of events $\set{\ops}$.
\end{lemma}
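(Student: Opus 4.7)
The plan is to show that, given the two-round structure of $\consensustwo$, both honest users apply the same deterministic sequence of updates to an identical prior local copy of $\pcstate$, and therefore hold equal states upon termination. I would organize the argument as a deterministic-simulation proof in three steps: (i) both users share the same aggregated event multiset after round one, (ii) they linearize it into the same totally-ordered sequence, and (iii) they apply the same transition logic $\agreef$ to obtain matching updates to $\pcstate$ and matching messages $\{m_j\}$ to forward in round two.

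For step (i), I would invoke the communication assumptions stated earlier in the paper: two \users sharing a \pc communicate over secure, authenticated channels (so no event is dropped, injected, or altered by the network), and the model is bounded-synchronous (so the absence of a message within the round bound signals the absence of new events from that sender). Hence each honest user's round-one view consists of its own local set $\{\ops\}_b$ plus exactly the set transmitted by its counterpart, and the union $\{\ops\}_0 \cup \{\ops\}_1$ is identical at both endpoints.

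For step (ii), the sorting criteria depend only on publicly known, deterministic data: the total order on user identifiers (highest first), the fixed type ordering $(\confirm, \release, \hold)$, and the total order on payment identifiers (decreasing). Starting from the same multiset and applying the same tie-breaking rules, both users produce identical sorted sequences. For step (iii), $\agreef$ is by construction a deterministic function of this sequence and of the current $\pcstate$; I would argue by induction on the prefix length that after processing the $k$-th event, both users hold identical $\pcstate$ values and have enqueued identical tuples $(\ops_j, m_j)$ for round two. The base case is immediate because both users agreed on $\pcstate$ at the end of the previous invocation of $\consensustwo$ (or at channel opening, where $\pcstate$ is determined by the \openc transaction posted on $\blockchain$).

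The main obstacle is the byzantine case. If one of the two users is corrupted, the honest user can only rely on what it actually receives in round one, and an equivocating counterpart could maintain a divergent internal ``state.'' This is, however, compatible with the lemma: by the global restriction established in \cref{lemma:byzantine}, at most one of the two users in a channel is byzantine, so there is at most one honest user whose view must be well-defined; and any downstream security property of \sysname and \sysnamenonblocking, in particular \correctbal, is formulated relative to the honest user's local view, which is deterministically derived from the multiset it observed. Hence ``agreement on $\pcstate$'' in the required sense reduces to the three-step deterministic argument above, which goes through whenever both users behave honestly and degrades gracefully to a well-defined honest-party state in the single-byzantine case.
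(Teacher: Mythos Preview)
Your proposal is correct and follows essentially the same approach as the paper: assume a consistent prior $\pcstate$, observe that the deterministic sorting rules yield the same ordered sequence of events at both users, and conclude that the deterministic $\agreef$ produces identical updates and forwarding tuples. Your version is considerably more explicit---in particular, you justify why the aggregated event set is identical at both endpoints (via the secure-channel and bounded-synchrony assumptions) and frame step (iii) as an induction, whereas the paper's proof is a two-sentence sketch; your byzantine discussion is additional material that the paper's own proof does not attempt.
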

\begin{proof}
Assume that \pcstate is consistent between two \users $\uid_i$ and $\uid_j$ before
 $\consensustwo(\uid_0, \uid_1, \set{\ops_i})$ is invoked. It is easy to see that both \users 
 come with the same sorted version of  $\set{\ops_i}$ since the sorting rules are deterministic.  
  Moreover, for each event, the function $\agreef$ deterministically updates \pcstate 
  and returns a tuple $(m, \ops)$. 
  As the events are applied in the same order by both \users, 
 they reach agreement on the same 
  updated \pcstate and the same set of tuples $\set{(\uid_k, \ops_k)}$.
  \end{proof}

%
\iftechreport
\begin{figure*}
\else
\begin{figure*}[tb]
\fi
\begin{mdframed}
{\bf Open channel:} On input $(\openchannel, \pair{\uid}{\uid'}, \val, \uid', \timeout, \fee)$ from a user $\uid$, $\ttp$ checks whether $\pair{\uid}{\uid'}$ is well-formed (contains valid identifiers and it is not a duplicate) and eventually sends $(\pair{\uid}{\uid'}, \val, \timeout, \fee)$ to $\uid'$, who can either abort or authorize the operation. In the latter case, $\ttp$ appends the tuple $(\pair{\uid}{\uid'}, \val, \timeout, \fee)$ to $\blockchain$ and the tuple $(\pair{\uid}{\uid'}, \val, \timeout, \cstate)$ to $\mathcal{L}$, for some random $\cstate$. $\ttp$ returns $\cstate$ to $\uid$ and $\uid'$.

\smallskip
{\bf Close channel:} On input $(\closechannel, \pair{\uid}{\uid'}, \cstate)$ from a user $\in \{ \uid', \uid \}$ the ideal functionality $\ttp$ parses $\blockchain$ for an entry $(\pair{\uid}{\uid'}, \val, \timeout, \fee)$ and $\mathcal{L}$ for an entry $(\pair{\uid}{\uid'}, \val', \timeout', \cstate)$, for $\cstate \ne \bot$. 
  If $\pair{\uid}{\uid'} \in \mathcal{C}$ or \REMOVEGfor{SAM}{170519}{$\pair{\uid}{\uid'}=\bot$} or $\timeout > |\blockchain|$ or $\timeout' > |\blockchain|$, 
  the functionality aborts. 
	Otherwise,  $\ttp$ adds the entry $(\pair{\uid}{\uid'}, \val', \timeout', \fee)$ to $\blockchain$ and adds 
	$\pair{\uid}{\uid'}$ to $\mathcal{C}$. $\ttp$ then notifies both users involved with a message $(\pair{\uid}{\uid'}, \bot, \cstate)$. \TODOP{Why $\bot$ here?}

\smallskip
{\bf Payment:} On input $(\pay, \val, 
	(\pair{\uid_0}{\uid_1}, \dots, \pair{\uid_n}{\uid_{n+1}}), (\timeout_0, \dots, \timeout_n), \hl{\Txid})$ from a user $\uid_0$, $\ttp$ executes the following interactive protocol: 
	\begin{enumerate}
	\item  For all $i \in \{1, \dots, (n+1)\}$, $\ttp$ parses $\blockchain$ for an entry of the form 
		$((\pair{\uid_{i-1}}{\uid'_{i}}, \val_i, \timeout_i', \fee_i))$. If such an entry does exist, $\ttp$ sends the tuple $(\hl{\Txid}, \hl{\Txid},$ 
		$\pair{\uid_{i-1}}{\uid_i}, \pair{\uid_i}{\uid_{i+1}}, \val - \sum^n_{j=i}\fee_j, \timeout_{i-1}, \timeout_i)$ to the user $\uid_i$ via an anonymous channel 
		(for the specific case of the receiver the tuple is only $(\hl{\Txid}, \pair{\uid_{n}}{\uid_{n+1}} , \val, \timeout_n)$). Then, $\ttp$ checks whether 
		for all entries of the form 
		 $(\pair{\uid_{i-1}}{\uid_{i}}, \val_i', \cdot, \cdot) \in \mathcal{L}$ it holds that $\val_i' \ge \left(\val - \sum^n_{j=i}f_j\right)$ and that $\timeout_{i-1} \ge \timeout_i$. If this is the case, $\ttp$ adds $d_i = (\pair{\uid_{i-1}}{\uid_{i}}, \val_i' - (\val - \sum^n_{j=i}\fee_j), \timeout_i, \bot)$  to $\mathcal{L}$, where $(\pair{\uid_{i-1}}{\uid_{i}}, \val_i', \cdot , \cdot) \in \mathcal{L}$ is the entry with the \REPLACEGfor{SAM}{170519}{highest}{lowest} $\val_i'$ \hl{and sets $\queued = n+1$}.
		 Otherwise, $\ttp$ performs the following steps:
			\begin{itemize}
			\item \hl{If there exists an entry of the form $(\pair{\uid_k}{\uid_{k+1}}, -, -, \Txid^*) \in \mathcal{L}$ such that $\Txid > \Txid^*$, then $\ttp$ adds 
			 $d_l = (\pair{\uid_{l-1}}{\uid_{l}}, \val_l' - (\val + \sum^n_{j=l}\fee_j), \timeout_l, \bot)$  to $\mathcal{L}$,
			for $l \in \set{1,\dots,k}$ . Additionally, $\ttp$ adds $(\Txid, (\pair{\uid_k}{\uid_{k+1}}, \dots,  \pair{\uid_n}{\uid_{n+1}}), \val - \sum^n_{j=k}\fee_j, (t_k, \dots, t_n))  \in \queue$. Finally, $\ttp$ sets $\queued = k$. }
			\item Otherwise, $\ttp$ removes from $\mathcal{L}$ all the entries $d_i$ added in this phase. \hl{Additionally, $\ttp$ looks for entries of the 
			form $ (\Txid', (\pair{i}{i+1}, \dots, \pair{\tilde{n}}{\tilde{n}+1}), \tilde{\val}, (\timeout_i, \dots, \tilde{\timeout_n}) ) \in \queue$, deletes them and 
			execute $(\pay, \tilde{\val}, (\pair{i}{i+1}, \dots, \pair{\tilde{n}}{\tilde{n}+1}),  (\timeout_i, \dots, \tilde{\timeout_n})).$}
			\end{itemize}

		 \item For all $i \in \{\hl{\queued}, \dots, 1\}$ $\ttp$ queries all $\uid_i$ with $(\cstate_i, \cstate_{i+1})$, through an anonymous channel. Each user can reply with either $\top$ or $\bot$. Let $j$ be the index of the user that returns $\bot$ such that for all $i > j: \uid_i$ returned  $\top$. If no user returned $\bot$ we set $j = 0$.

		\item For all $i \in \{ j+1, \dots, \hl{\queued}\}$ the ideal functionality $\ttp$ updates $d_i \in \mathcal{L}$ (defined as above) to $(-, -, -, \Txid)$ and notifies the user of the success of the operation with with some distinguished message 
		$(\success, \Txid, \Txid)$. For all $i \in \{0, \dots, j\}$ (if $j \ne 0$) $\ttp$ performs the following steps: 
		\begin{itemize}
		\item Removes $d_i$ from $\mathcal{L}$ and notifies the user with the message $(\bot, \hl{\Txid}, \hl{\Txid})$.
		\item \hl{$\ttp$ looks for entries of the 
			form $ (\Txid', (\pair{i}{i+1}, \dots, \pair{\tilde{n}}{\tilde{n}+1}), \tilde{\val}, (\timeout_i, \dots, \tilde{\timeout_n}) ) \in \queue$, removes them from $\queue$ 
			and  execute $(\pay, \tilde{\val}, (\pair{i}{i+1}, \dots, \pair{\tilde{n}}{\tilde{n}+1}),  (\timeout_i, \dots, \tilde{\timeout_n})).$}
		\end{itemize}		
	\end{enumerate}
\end{mdframed}
\vspace{-1em}
\caption{Ideal world functionality for \pcns for non-blocking progress.\label{fig:ideal-world-non-blocking}}
\vspace{-1em}
\end{figure*}

%

\section{Ideal World Functionality for Non-Blocking Payments}
\label{sec:ideal-world-non-blocking}

In this section, we detail the ideal world functionality for a \pcn that handles concurrent 
\payments in a non-blocking manner. We highlight in \colorchange the changes with 
respect to the ideal world functionality presented in~\cref{sec:ideal-world} that correspond 
to a \pcn that handles concurrent \payments in a blocking manner. Moreover, 
we assume the same 
model, perform the same assumptions and use the same notation as described in~\cref{sec:ideal-world}.
Additionally, we use the variable $\queued$ to track at which intermediate \user the 
\payment is queued if there is not enough capacity in her channel and the \payment 
identifier is higher than those in-flight. Moreover, we use a list $\queue$ to keep track 
of remaining hops for queued \payments. Entries in $\queue$ are of the form 
$((\pair{\uid_1}{\uid_2}, \dots, \pair{\uid_k}{\uid_{k+1}}), \val, (\timeout_1, \dots, \timeout_k))$ and 
contain the remaining list of \pcs $(\pair{\uid_1}{\uid_2}, \dots, \allowbreak\pair{\uid_k}{\uid_{k+1}})$, their 
associated timeouts $(\timeout_1, \dots, \timeout_k)$ and the 
remaining \payment value $\val$.

For simplicity we only model unidirectional channels, 
although our functionality can be easily extended to support also bidirectional channels. 
The execution of our simulation starts with $\ttp$ querying $\ttp_\blockchain$ to 
initialize it and $\ttp$ initializing itself the locally stored empty lists $\mathcal{L}, \mathcal{C}, \queue$.

%

\begin{figure}[h!]
\includegraphics[width=\columnwidth]{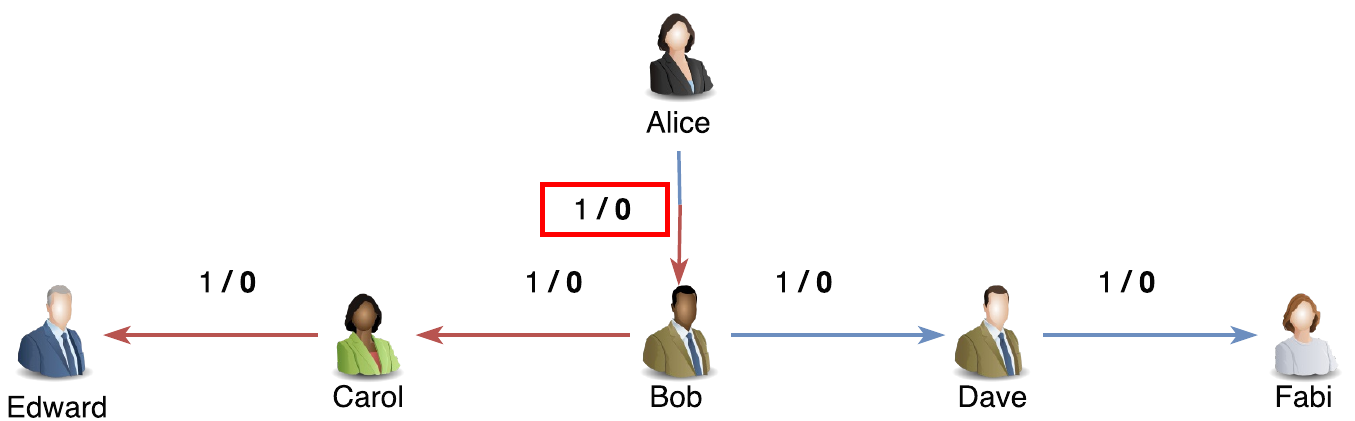}
\caption{ Execution depicting two \payments:\payment $\Txid_i$ from Alice to Edward and \payment $\Txid_j$
from Alice to Fabi. If Alice and Bob are byzantine, they can allow both \payments to be successful (while losing \REPLACEPfor{SGAM}{170824}{pay}{funds} themselves). \label{fig:bottleneck}}
\vspace{-1em}
\end{figure}


\section{Proof for Concurrency Lemmas}
\label{sec:concurrency-proofs-lemmas}

\iftechreport
\begin{figure}[H]
\else
\begin{figure}[b]
\fi
\vspace{-1em}
\includegraphics[width=\columnwidth]{figures/bottleneck.pdf}
\caption{ Execution depicting two \payments:\payment $\Txid_i$ from Alice to Edward and \payment $\Txid_j$
from Alice to Fabi. If Alice and Bob are byzantine, they can allow both \payments to be successful (while losing \REPLACEPfor{SGAM}{170824}{pay}{funds} themselves). \label{fig:bottleneck}}
\end{figure}


\begin{proof}[Proof for Lemma~\ref{lemma:byzantine}]
Consider an execution of two \payments depicted in Figure~\ref{fig:bottleneck}: \payment $\Txid_i$ from Alice to Edward and \payment $\Txid_j$
from Alice to Fabi. The \pc between Alice and Bob is a contending bottleneck for both $\Txid_i$ and $\Txid_j$, however, only one of the \payments can be
successfully executed since the \pc between Alice and Bob has the capacity for only one of the two to be successful. Suppose by contradiction that both
$\Txid_i$ and $\Txid_j$ are successfully completed. Indeed, this is possible since byzantine \users Alice and Bob can respond with an incorrect
\pc capacity to \users Edward and Fabi. 
However, 
the \pc between Alice and Bob does not have sufficient capacity for both transactions to be successful---contradiction since there does not exist any equivalence to the sequential specification of payments channels.
\end{proof}

%
\begin{proof}[Proof for Lemma~\ref{lemma:transaction-id}]
Suppose by contradiction that there exists a strictly serializable disjoint-access implementation providing non-blocking progress.
Consider the following payment network: 
$\uid_1$ $\rightarrow$ $\uid_2$ $\rightarrow$ $\uid_3$ $\rightarrow$ $\uid_4$ $\rightarrow$ $\uid_5$ $\rightarrow$ $\uid_1$.
Consider two concurrent \pay operations of the form $\pay_1(\pair{\uid_1}{\uid_2},$  $\pair{\uid_2}{\uid_3},$ $\pair{\uid_3}{\uid_4}, \pair{\uid_4}{\uid_5},\val)$ and 
$\pay_2(\pair{\uid_4}{\uid_5}, \allowbreak \pair{\uid_5}{\uid_1}, \pair{\uid_1}{\uid_2},$ $\pair{\uid_2}{\uid_3},$ $\val)$. 
Consider the execution $E$ in which $\pay_1$ and $\pay_2$ run concurrently up to the following step: $\pay_1$ executes from $\uid_1 \rightarrow \ldots \uid_4$
and $pay_2$ executes from $\uid_4$ $\rightarrow$ $\uid_5$ $\rightarrow$ $\uid_1$.
Let $E_1$ (and resp. $E_2$) be the extensions of $E$ in which $\pay_1$ (and resp. $\pay_2$) terminates \emph{successfully} and $\pay_2$
(and resp. $\pay_1$) terminates \emph{unsuccessfully}.
By assumption of non-blocking progress, there exists such a finite extension of this execution in which both $\pay_1$ and $\pay_2$ must terminate (though they may not
be successful since this depends on the available channel capacity).

Since the implementation is disjoint-access parallel, execution $E_1$ is \emph{indistinguishable} to $(\uid_1, \ldots, \uid_5)$
(and resp. $(\uid_4, \ldots, \uid_3)$) from the execution
$\bar E$, an extension of $E$, in which only $\pay_1$ (and resp. $\pay_2$) is successful \emph{and} matches the sequential specification of \pcn.
Note that analogous arguments applies for the case of $E_2$.

However, $E_1$ (and resp. $E_2$) is not a correct execution since it lacks the \emph{all-or-nothing} semantics: 
only a proper subset of the channels from the execution $E$ involved in $\pay_1$ (and resp. $\pay_2$)
have their capacities decreased  by $v$ (and resp. $v'$). This is a contradiction to the assumption of strict serializability, thus completing the proof.  
\end{proof}

\end{document}